\theoremstyle{proposition}
\newtheorem{proposition}{Proposition}
\theoremstyle{theorem}
\newtheorem{theorem}{Theorem}
\journal{Signal Processing}
\begin{document}

\begin{frontmatter}
\title{GLOSS: Tensor-Based Anomaly Detection in Spatiotemporal Urban Traffic Data\blfootnote{This work was supported in part by NSF CCF-1615489 and DMS-1924724.}}
\author[1]{Seyyid Emre Sofuoglu}
\author[1]{Selin Aviyente}
\address[1]{{Department of Electrical and Computer Engineering  Michigan State University},
            {East Lansing},
            {MI},
            {48824}. 
            }

\begin{abstract}
Anomaly detection in spatiotemporal data is a challenging problem encountered in a variety of applications including hyperspectral imaging, video surveillance and urban traffic monitoring. In the case of urban traffic data, anomalies refer to unusual events such as traffic congestion and unexpected crowd gatherings. Detecting these anomalies is challenging due to the dependence of anomaly definition on time and space. In this paper, we introduce an unsupervised tensor-based anomaly detection method for spatiotemporal urban traffic data. The proposed method assumes that the anomalies are sparse and temporally continuous, {i.e.},  anomalies appear as spatially contiguous groups of locations that show anomalous values consistently for a short duration of time. Furthermore, a manifold embedding approach is adopted to preserve the local geometric structure of the data across each mode. The proposed framework, Graph Regularized Low-rank plus Temporally Smooth Sparse decomposition (GLOSS), is formulated as an optimization problem and solved using alternating method of multipliers (ADMM). The resulting algorithm is shown to converge and be robust against missing data and noise. The proposed framework is evaluated on both synthetic and real spatiotemporal urban traffic data and compared with baseline methods.
\end{abstract}

\begin{keyword}
Anomaly Detection, Tensor Decomposition, Graph Regularization, ADMM, Urban Spatiotemporal Data.
\end{keyword}

\end{frontmatter}

\section{Introduction}
Large volumes of spatiotemporal (ST) data are increasingly collected and studied in diverse domains, including climate science, social sciences, neuroscience, epidemiology \cite{bhunia2013spatial}, transportation \cite{djenouri2019survey}, mobile health, and Earth sciences \cite{zhang2016tensor}. One emerging application of interest in spatiotemporal data is anomaly detection. Detecting anomalies can help us identify interesting but rare phenomena, e.g. abnormal flow of crowds, traffic congestion in traffic monitoring or hot-spots for monitoring outbreaks in infectious diseases.


The definition of anomaly and the suitability of a particular method is determined by the application. In this paper, we focus on urban anomaly detection \cite{zhang2020urban, zhang2019decomposition, li2019tensor, lin2018anomaly, zhang2018detecting}, where anomalies correspond to incidental events that occur rarely, such as irregularity in traffic volume, unexpected crowds, etc. Urban data are spatiotemporal data collected by mobile devices or distributed sensors in cities and are usually associated with timestamps and location tags. Detecting and predicting urban anomalies are of great importance to policymakers and governments for understanding city-scale human mobility and activity patterns, inferring land usage and region functions and discovering traffic problems \cite{zhang2020urban, celes2019crowd}. The main challenges in the detection of urban anomalies based on spatio-temporal data include the scarcity of anomalies and the dependence of what constitutes an anomaly on spatial and temporal factors. 

In this paper, we will focus on urban event data such as trip records which comprise  discrete events occurring at point locations and times. Different data structures have been used to represent urban data including time series, matrices, graphs and tensors \cite{zhang2020urban}. However, it is important for the selected data structure to capture the spatial and temporal relations within the data. For this reason, in this paper we will represent spatiotemporal urban event data using higher-order tensors  to capture both temporal and spatial dependencies. In particular, we propose a robust tensor decomposition method for unsupervised anomaly detection in urban event data. 
The main contributions of the proposed method are:
\begin{itemize}
     \item A low-rank plus sparse tensor decomposition similar to higher order robust PCA (HoRPCA) \cite{goldfarb2014robust} is adopted to decompose the urban traffic data into normal and anomalous components. The normal component corresponds to the low-rank structure of the observed tensor while the sparse tensor captures the anomalies. Unlike regular HoRPCA, we employ a weighted nuclear norm definition to emphasize the difference in low-rank structure across modes. 
    \item The proposed low-rank plus sparse tensor decomposition is modified to account for the characteristics of urban traffic data. As anomalies tend to last for periods of time, we impose temporal smoothness on the sparse part of the tensor through total variation regularization. This regularization ensures that instantaneous changes in the data, which may be due to errors in sensing, are not mistaken for actual anomalies. This formulation leads to our first algorithm; low-rank plus temporally smooth sparse (LOSS) tensor decomposition. 
    \item We introduce a graph regularized version of LOSS to exploit the geometric structure of the data. This new algorithm, named GLOSS, preserves the local geometry across each mode while still obtaining a low-rank approximation to the observed spatiotemporal tensor.
    \item Our optimization framework is formulated in a flexible manner such that it solves a tensor completion problem simultaneously with the anomaly detection problem. As such, the resulting algorithm is robust against missing data.
    \item Finally, as the proposed framework extracts low-dimensional spatiotemporal projections of the original data to separate anomalies from normal events, the resulting spatiotemporal features can be used as input to any conventional statistical anomaly detection or scoring algorithm.
\end{itemize}
 
The rest of the paper is organized as follows. In Section, \ref{sec:related}, we review some of the related work in spatiotemporal anomaly detection, in particular tensor based anomaly detection methods. In Section \ref{sec:back}, we provide background on tensor operations and tensor norms. In Section \ref{sec:meth}, we formulate the optimization problem and propose an ADMM based solution. In Section \ref{sec:Exp}, we describe the experimental settings both with synthetic and real data and compare the proposed method with baseline anomaly detection methods as well as tensor based methods.

\section{Related Work}
\label{sec:related}

Spatiotemporal data bring with themselves unique challenges to anomaly detection problem due to the autocorrelation structure of the normal points. Different  algorithms have been proposed to detect three types of ST anomalies: point anomalies; trajectory anomalies and group anomalies.  Point anomalies are defined as spatiotemporal outliers that break the natural ST autocorrelation structure of the normal points.  Most ST point anomaly detection algorithms such  as ST-DBSCAN  \cite{kut2006spatio} assume homogeneity in neighborhood properties across space and time, which can be violated in the presence of ST heterogeneity. Trajectory anomalies are usually detected by computing pairwise similarities among trajectories and identifying trajectories that are spatially distant from the others \cite{lee2007trajectory}. Finally, group anomalies appear in ST data as spatially contiguous groups of locations that show anomalous values consistently for a short duration of time stamps. The urban anomalies considered in this paper fall into this last category. Most approaches for detecting group anomalies in ST urban traffic data decompose the anomaly detection problem by first treating the spatial and temporal properties of the outliers independently, and  then merging together in a post-processing step \cite{faghmous2013parameter,wu2008spatio}. Traditional vector based methods focus on building appropriate time series model to characterize
the inherent spatiotemporal patterns, dependencies, and the
generative mechanism of the data such as  time-varying autoregressive (TVAR) \cite{bringmann2017changing,bringmann2018modeling} and
switching Kalman filtering/smoothing (SKF/SKS) \cite{murphy1998switching,nguyen2018anomaly},
which are built on traditional dynamic linear models
(DLMs) \cite{li2015trend}. However,  as the number of sensors increases, scalability becomes a critical issue and these methods become unreliable. For these reasons, one natural approach to address ST group  anomaly detection has been to use tensor decomposition.

Low-rank tensor decomposition and completion have been proposed as suitable approaches to anomaly detection in spatiotemporal data as these methods are a natural extension of spectral anomaly detection techniques from vector to multi-way data \cite{fanaee2016tensor,li2015low,geng2014high,zhang2016tensor,chen2016fine,lin2018anomaly,li2019tensor,xu2019anomaly}.  Most of the existing tensor based methods are supervised or semi-supervised and focus on dimensionality reduction and feature extraction. In this line of work, tensor decomposition is first applied to normal tensor samples and the factor matrices are fed to a classifier to build a model for normal activity. This model is then used to predict the label of observation in the test factor matrix.  These methods require labeled historical data which is not suitable for online real-time anomaly detection. Unsupervised tensor based anomaly detection methods \cite{xu2019anomaly}, on the other hand, aim to learn spatiotemporal features within a representation learning framework \cite{fanaee2016event,xu2019anomaly,shi2015stensr}. The learned features, {i.e.} factor matrices or core tensors, are then used to detect anomalies by  monitoring the reconstruction error at each time point \cite{papalexakis2012network,papalexakis2014spotting,sun2006beyond,xu2019anomaly} or by applying well-known statistical tests to the extracted multivariate features \cite{fanaee2016event,zhang2016tensor}. 

The existing tensor based anomaly detection methods have multiple shortcomings. First, they rely on well-known tensor decomposition models such as Tucker \cite{fanaee2016event,zhang2016tensor,xu2019anomaly} and CP \cite{li2019tensor}. These methods use tensor decomposition to obtain a low-dimensional projection of the data without taking the particular structure of anomalies into account. The proposed method incorporates the sparseness of anomalies into the model by assuming that the anomalies lie in the sparse part of the tensor rather than in the low-rank part. Moreover, by incorporating tensor completion, the proposed method is robust against missing input data. Second, prior work on HoRPCA within the framework of anomaly detection  \cite{li2015low,geng2014high} does not ensure temporal smoothness for the detected anomalies. However, in urban data, anomalies typically last for some time and are not instantaneous. Finally, the existing tensor based anomaly detection methods are limited to anomalies that lie in low dimensional linear subspaces. The proposed method utilizes a graph Laplacian regularization approach to  preserve the intrinsic manifold structure of high-dimensional data and capture the nonlinearities in the data structure. Although graph regularized tensor decomposition has been used in both unsupervised and supervised learning applications \cite{li2016mr, jiang2018image, qiu2020generalized}, the graph regularization proposed in this paper is novel in some key ways. In prior work, most of the focus has been on supervised learning and the graph regularization is with respect to the mode-$N$ unfolding matrix, where mode-$N$ corresponds to the sample mode. Unlike prior work, we consider regularization across all modes to account for the heterogeneity in the data. Moreover, there have been different interpretations of what constitutes the low-dimensional representation of the tensor objects. While some papers consider the decomposition factors along object dimension as  the low-dimensional projections \cite{qiu2020generalized}, others have considered the core tensors as the low-dimensional features \cite{li2016mr}. In this paper, we consider the low-rank approximation of the observed data tensor as the low-dimensional projection similar to \cite{sun2018graph}. 
\section{Background}
\label{sec:back}
Let $\Y\in\R^{I_1\times I_2\times \dots \times I_N}$ be a $N$-mode tensor, where $y_{i_1,i_2,\dots,i_N}$ denotes the $({i_1,i_2,\dots,i_N})^{th}$ element of the tensor $\Y$. Vectors obtained by fixing all indices of the tensor except the one that corresponds to $n$th mode are called mode-$n$ fibers and denoted as $\y_{i_1,\dots i_{n-1},i_{n+1},\dots i_N}\in\R^{I_n}$.

\noindent \textbf{Definition 1.} (Mode-$n$ product) The mode-$n$ product of a tensor $\mathcal{A}\in \mathbb{R}^{I_1\times ... I_n\times ...\times I_N}$ and a matrix ${\bf{U}} \in \mathbb{R}^{J\times I_n} $ is denoted as $\mathcal{B}=\mathcal{A} \times_{n} {\bf{U}}$ and defined as $b_{i_{1},\ldots,i_{n-1},j,i_{n+1}\ldots,i_{N}}=\sum_{i_{n}=1}^{I_{n}}a_{i_{1}\ldots,i_{n},\ldots,i_{N}}u_{j,i_{n}}$, where $\mathcal{B} \in \mathbb{R}^{I_1\times ...\times I_{n-1} \times J \times I_{n+1}\times ...\times I_N}$. 

\noindent \textbf{Definition 2.} (Mode-$n$ unfolding) The mode-$n$ unfolding of a tensor $\Y$ is defined as $\Y_{(n)}\in \R^{I_n\times \prod_{n'=1, n'\neq n}^NI_{n'}}$ where the mode-n fibers of the tensor $\Y$ are the columns of $\Y_{(n)}$ and the remaining modes are organized accordingly along the rows.

\noindent \textbf{Definition 3.} (Mode-$n$ Concatenation) Mode-$n$ concatenation unfolds the input tensors along mode-$n$ and stacks the unfolded matrices across rows:
\begin{gather}
    \text{cat}_n(\Y,\Y) = 
    \begin{bmatrix}
        \Y_{(n)}^\top&\Y_{(n)}^\top
    \end{bmatrix}^\top.
\end{gather}
If the input are a set of tensors, {e.g.} $\{\TT\} = \{\TT_1, \TT_2, \dots, \TT_M\}$, where $\TT_m \in \R^{I_1\times \dots\times I_N}, \forall m\in \{1,\dots,M\}$, then $\text{cat}_n(\{\TT\})$ stacks all mode $n$ unfoldings of tensors $\{\TT\}$ across rows into a matrix of size $MI_n \times \prod_{n'=1, n'\neq n}^NI_{n'}$.

\noindent \textbf{Definition 4.} (Tensor norms)  In this paper, we employ three different tensor norms. Frobenius norm of a tensor is defined as $\|\Y\|_F=\sqrt{\sum_{i_1,i_2,\dots,i_N} y_{i_1,i_2,\dots,i_N}^2}$.  $\ell_1$ norm of a tensor is defined as $\|\Y\|_1=\sum_{i_1,i_2,\dots,i_N}|y_{i_1,i_2,\dots,i_N}|$. Finally, in this paper the nuclear norm of a tensor is defined as the weighted sum of the nuclear norms of all mode-n unfoldings of a tensor, namely $\|\Y\|_*=\sum_{n=1}^N\psi_n \|\Y_{(n)}\|_*$, where $\psi_{n}$s are the weights corresponding to each mode  \cite{tomioka2010extension,goldfarb2014robust}. 

\noindent \textbf{Definition 5.} (Mode-n Graph Adjacency and Laplacian Matrices) Given the tensor $\Y$, we would like to construct graphs that model the local geometry or neighborhood information across each mode. The idea is to connect each row of the mode-$n$ unfolding with its $k$-nearest neighborhoods. Let $\G^n=(\V^n, \mathcal{E}^n)$ be a graph where $\mathcal{E}^n$ correspond to edges and $\V^n$ are the vertices (rows of mode-$n$ unfolding of $\Y$). The mode-n adjacency matrix is defined by quantifying the similarity, $w_{s,s'}^n$, between rows $s$ and $s'$ of mode-$n$ unfolding of $\Y$:
\vspace{-.5em}
\begin{gather}
    w_{ss'}^n=
    \begin{cases}
        e^{-\frac{\|\Y_{(n),s}-\Y_{(n),s'}\|_F^2}{2\sigma}}, & \text{if }  \Y_{(n),s}\in \Nb_k(\Y_{(n),s'}) \\
        &\text{ or } \Y_{(n),s'}\in \Nb_k(\Y_{(n),s})\\
        0, & \text{otherwise}
    \end{cases},
\end{gather}
where $\Nb_k(\Y_{(n),s})$ is the Euclidean k-nearest neighborhood of the $s^{th}$ row of $\Y_{(n)}$, $\Y_{(n),s}$. The mode-$n$ graph Laplacian $\Phi^n$ is then defined as $\Phi^n=D^n-W^n$, where $D^n$ is the diagonal degree matrix with entries $d_{i,i}^n = \sum_{i'=1}^{I_n} w_{i,i'}^n$.

\noindent \textbf{Definition 6.} (Support Set) Let $\Omega$ be a support set defined for tensor $\Y$, {i.e.} $\Omega \in \{1,\dots,I_1\}\times \{1,\dots,I_2\}\times\dots\times\{1,\dots,I_N\}$. The projection operator on this support set, $\P_\Omega$, is defined  as:
\begin{gather}
    \P_\Omega[\Y]_{i_1,i_2,\dots,i_N} = 
    \begin{cases}
        \Y_{i_1,i_2,\dots,i_N}, & (i_1,i_2,\dots,i_N)\in\Omega,\\
        0, & \text{otherwise.}
    \end{cases}
\end{gather}
The orthogonal complement of the operator $\P_\Omega$ is defined in a similar manner as:
\begin{gather}
    \P_{\Omega^\perp}[\Y]_{i_1,i_2,\dots,i_N} = 
    \begin{cases}
        \Y_{i_1,i_2,\dots,i_N}, & (i_1,i_2,\dots,i_N)\notin\Omega,\\
        0, & \text{otherwise.}
    \end{cases}
\end{gather}

\section{Methods}
\label{sec:meth}
In the following discussions, for the sake of simplicity, we model spatiotemporal data as a four mode tensor $\Y\in\R^{I_1\times I_2\times I_3\times I_4}$. The first mode is denoted as the temporal mode and corresponds to hours in a day. The second mode corresponds to the days of a week as urban traffic activity shows highly similar patterns on the same days of different weeks. The third mode corresponds to the different weeks and the last mode corresponds to the spatial locations, such as stations for metro data, sensors for traffic data or zones for other urban data.

\vspace{-1em}
\subsection{Problem Statement}
Assuming that anomalies are rare events, our goal is to decompose $\Y$ into a low-rank part, $\L$, that corresponds to normal activity and a sparse part, $\S$, that corresponds to the anomalies. This model relies on the assumption that normal activity can be embedded into a lower dimensional subspace while anomalies are outliers. We also take into account the existence of missing elements from the data, {i.e.}, the observed data is $\P_{\Omega}[\Y]$. This goal can be formulated through:
\begin{gather}
    \min_{\L,\S}\|\L\|_*+\lambda\|\S\|_1, \quad \text{s.t. } \P_{\Omega}[\L+\S]=\P_{\Omega}[\Y],
\end{gather}
where $\lambda$ is the regularization parameter for sparsity.

Since urban anomalies tend to be temporally continuous, {i.e.}, smooth in the first mode, this assumption can be incorporated into the above formulation as: 
\begin{gather}
    \min_{\L,\S}\|\L\|_*+\lambda\|\S\|_1+\gamma\|\S\times_1 \Delta\|_1, \; \text{s.t. }\P_{\Omega}[\L+\S]=\P_{\Omega}[\Y],
    \label{eq:opt_lrt}
\end{gather}
where $\gamma$ is the regularization parameter for temporal smoothness and $\|\S\times_1 \Delta\|_1$ quantifies the sparsity of the projection of the tensor $\S$ onto the discrete-time differentiation operator along the first mode where $\Delta$ is defined as:
\begin{gather}
    \Delta = 
    \begin{bmatrix}
        1&-1&0&\dots&0\\
        0&1&-1&\dots&0\\
        \dots&\dots&\dots&\dots&\dots\\
        0&0&\dots&1&-1\\
        -1&0&\dots&0&1
    \end{bmatrix}.
\end{gather}

It is common to incorporate the relationships among data points as auxiliary information in addition to the low-rank assumption to improve the quality of tensor decomposition \cite{narita2012tensor}. This approach, also known as  manifold learning, is an effective dimensionality reduction technique leveraging geometric information. The intuitive idea behind manifold learning is that if two objects are close in the
intrinsic geometry of data manifold, they should be close
to each other after dimensionality reduction. For tensors, this usually reduces to forcing two similar objects  to behave similarly in the projected low-dimensional space through a graph
Laplacian term. In this paper, since we are trying to learn anomalies from a single tensor, we do not have tensor samples and their projections. Instead, we preserve the relationships between each mode unfolding of the tensor data as each mode corresponds to a different attribute of the data. 

\begin{figure}[H]
    \centering
    \begin{subfigure}[b]{\textwidth}
        \resizebox{\textwidth}{!}{
        \begin{tikzpicture}
            \draw (0, -6, 3)--(0, -6, 3) node[midway, above, sloped] {\Large $52$ (Weeks)};
            \draw[line width=0 pt] (2.5, -7, 0) -- (2.5, -7, 3) node[midway, below, sloped] {\Large $7$ (Days)};
            \draw (1, -7, 3) node[below] {\Large $24$ (Hours)};
            \draw (-.8, -7, 3)--(-.8,2,3) node[midway, above, sloped] {\Large $81$ (Zones)};
            \tensimg{1.25}{2.5}{-6}{-1}{0}{3}{54}{}{2.5}
            \tensimg{1.25}{2.5}{-2}{-1}{0}{3}{25}{}{2.5}
            \tensimg{1.25}{2.5}{1}{2}{0}{3}{1}{}{2.5}
            \draw (1,-4, 1.5) node [rotate = 90] {\Large ...};
            \draw (1, -8, 3) node[below] {\Huge \bf $\Y\in\R^{24\times 7\times52\times 81}$};
            \draw (.5*\xshift, -4) node {\Huge $\xrightarrow[\text{unfolding}]{\text{Mode}-n }$ };
            \draw (.8*\xshift+3.5, 1, 0) node {\includegraphics[width=7cm]{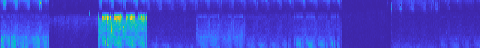}};
            \draw (.8*\xshift, 1) node[left] {\Large $\Y^{(1)}$};
            \draw (.8*\xshift+5, -1.75, 0) node {\includegraphics[width=10cm]{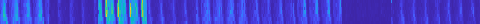}};
            \draw (.8*\xshift, -1.75) node[left] {\Large $\Y^{(2)}$};
            \draw (.8*\xshift+2, -4.75, 0) node {\includegraphics[width=4cm]{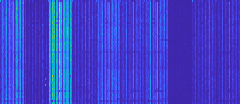}};
            \draw (.8*\xshift, -4.75) node[left] {\Large $\Y^{(3)}$};
            \draw (.8*\xshift+1.5, -8.5, 0) node {\includegraphics[width=3cm]{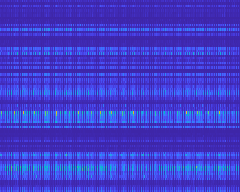}};
            \draw (.8*\xshift, -8.5) node[left] {\Large $\Y^{(4)}$};
            \draw (1.8*\xshift,-4) node {\huge $\xrightarrow[\text{Similarity Graph}]{\text{Mode-}n}$};
        \end{tikzpicture}
        \tdplotsetmaincoords{70}{0}    
        \begin{tikzpicture}[tdplot_main_coords]
            \CircGraph{4.5}{.2}{24}{blue!70}{+60}
            \draw (-3,0,4) node {\huge $\G^1$};
            \CircGraph{2}{.7}{7}{green!70}{20}
            \draw (-3,0,1) node {\huge $\G^2$};
            \RandGraph{4}{.2}{30}{yellow!70}{-2}
            \draw (-4,0,-2) node {\huge $\G^3$};
            \RandGraph{4}{.3}{44}{red!70}{-6}
            \draw (-4.2,0,-6) node {\huge $\G^4$};
        \end{tikzpicture}
        \begin{tikzpicture}
            \draw (-2,-4) node {\huge $\xrightarrow[\text{Laplacians }\Phi^n]{\text{Mode}-n}$};
            \draw (1,-4) node[rotate=90, align=center] {\Large $\min_{\L, \S} \|\L\|_*+\theta \sum_{n=1}^4\mathrm{tr}\left(\L_{(n)}^\top \Phi^n\L_{(n)}\right)+$\\\Large $\lambda\|\S\|_1+\gamma\sum_{i_2,i_3,i_4}\|\s_{i_2,i_3,i_4}\|_{\TV}$ };
            \draw (4,-4) node {\huge $\xrightarrow[]{\text{Algorithm \ref{alg:gloss}}}$};
            
            \draw[line width=0 pt] (2.5+\xshift, -7, 3) -- (2.5+\xshift, -7, 0) node[midway, below, sloped] {\Large $7$ (Days)};
            \draw (\xshift, -6, 3)--(\xshift, -6, 3) node[midway, above, sloped] {\Large $52$ (Weeks)};
            \draw (\xshift+1, -7, 3) node[below] {\Large $24$ (Hours)};
            \draw (-.8+\xshift, -7, 3)--(-.8+\xshift,2,3) node[midway, above, sloped] {\Large $81$ (Zones)};
            \tensimg{\xshift+1.25}{2.5}{-6}{-1}{0}{3}{54}{_lr}{2.5}
            \tensimg{\xshift+1.25}{2.5}{-2}{-1}{0}{3}{25}{_lr}{2.5}
            \tensimg{\xshift+1.25}{2.5}{1}{2}{0}{3}{1}{_lr}{2.5}
            \draw (1+\xshift,-4, 1.5) node [rotate = 90] {\Large ...};
            \draw (\xshift+1, -8, 3) node[below] {\Huge \bf $\L\in\R^{24\times 7\times52\times 81}$};
            \draw (1.3*\xshift+1,-4, 1.5) node {\Huge +};
            
            \draw[line width=0 pt] (2.5+1.6*\xshift, -7, 3) -- (2.5+1.6*\xshift, -7, 0) node[midway, below, sloped] {\Large $7$ (Days)};
            \draw (1.6*\xshift, -6, 3)--(1.6*\xshift, -6, 3) node[midway,above,sloped] {\Large $52$ (Weeks)};
            \draw (1+1.6*\xshift, -7, 3) node[below] {\Large $24$ (Hours)};
            \draw (3.3+1.6*\xshift, -7, 0)--(3.3+1.6*\xshift,2,0) node[midway, below, sloped] {\Large $81$ (Zones)};
            \tensimg{1.6*\xshift+1.25}{2.5}{-6}{-1}{0}{3}{54}{_sp}{2.5}
            \tensimg{1.6*\xshift+1.25}{2.5}{-2}{-1}{0}{3}{25}{_sp}{2.5}
            \tensimg{1.6*\xshift+1.25}{2.5}{1}{2}{0}{3}{1}{_sp}{2.5}
            \draw (1.6*\xshift+1,-4, 1.5) node [rotate = 90] {\Large ...};
            \draw (2+1.6*\xshift, -8, 3) node[below] {\Huge \bf $\S\in\R^{24\times 7\times52\times 81}$};
            \draw (2.1*\xshift,-4) node {\Huge $\xrightarrow[]{\S}$};
            \draw (2.4*\xshift,-4) node[block, text width=3cm] {\huge Anomaly Scoring: EE \cite{rousseeuw1999fast}, LOF \cite{breunig2000lof}};
        \end{tikzpicture}
    }
    \end{subfigure}
    \caption{An illustration of the proposed method (Left to right): Four mode tensor corresponding to urban traffic data; Mode-$n$ unfolding across each mode; Similarity graph construction across each mode; Optimization Algorithm and Anomaly Scoring. }
    \label{fig:illust}
    \vspace{-1em}
\end{figure}
For this reason, in this paper, we consider  four graph Laplacian terms corresponding to the four modes (see Figure \ref{fig:illust}). For example, in urban traffic anomaly detection, mode-$1$ corresponds to the 24 hours within a day. When we build an adjacency matrix based on mode-$1$ unfoldings, i.e. $W^{1}$, we quantify the similarity of the traffic profiles between each one hour time frame regardless of day, week and spatial location.  Accordingly, the entries of the low-rank approximation unfolded across the first mode, $\mathcal{L}_{(1)}$, should preserve the original similarities, i.e. $\sum_{i,i'=1}^{I_{1}}w_{ii'}^{1} \sum_{p=1}^{I_{2}I_{3}I_{4}}(\mathcal{L}_{(1)}(i,p)-\mathcal{L}_{(1)}(i',p))^{2}$ should be minimized. Generalizing this idea across all modes and rewriting the distance metrics in terms of the graph Laplacians across each mode results in:
\begin{gather}
    \min_{\L,\S} \|\L\|_{*}+\frac{\theta}{2}\sum_{n=1}^4\sum_{i=1}^{I_n}\sum_{\substack{i'=1\\ i'\neq i}}^{I_n}\|\L_{(n),i}-\L_{(n),i'}\|_F^2 w_{ii'}^n+\lambda\|\S\|_1+\gamma\|\S\times_1 \Delta\|_1, \nonumber\\s.t. \quad\P_{\Omega}[\L+\S]=\P_{\Omega}[\Y],\nonumber
\end{gather}
where $\theta$ is the weight parameter for graph regularization.

The above optimization problem can equivalently be rewritten using trace norm to represent the graph regularization and total variation (TV) norm  across the temporal mode to describe temporal smoothness as:
\begin{gather}
    \min_{\L, \S} \|\L\|_*+\theta \sum_{n=1}^4\mathrm{tr}\left(\L_{(n)}^\top \Phi^n\L_{(n)}\right)+\lambda\|\S\|_1+\gamma\sum_{i_2,i_3,i_4}\|\s_{i_2,i_3,i_4}\|_{\TV},\nonumber\\
    \P_{\Omega}[\L+\S]=\P_{\Omega}[\Y],
    \label{eq:finopt}
\end{gather}
where $\|.\|_{\TV}$ denotes the total variation norm. The solution to the optimization problems \eqref{eq:opt_lrt} and  \eqref{eq:finopt} will be referred to as Low-rank plus Temporally Smooth Sparse (LOSS) Decomposition, and Graph Regularized Low-rank plus Temporally Smooth Sparse (GLOSS) Decomposition, respectively. Figure \ref{fig:illust} illustrates the different terms in the objective function in \eqref{eq:finopt}.

\subsection{Optimization}
The proposed objective function is convex. In prior work, ADMM has been effective at solving similar optimization problems \cite{goldfarb2014robust, aggarwal2016hyperspectral}. Thus, we follow a similar  approach for solving the optimization problem given in (\ref{eq:finopt}). To separate the minimization of TV, $\ell_1$ norm and graph regularization from each other, we introduce auxiliary variables $\Z, \W, \{\Lx\}:=\{\Lx^1,\Lx^2,\Lx^3,\Lx^4\}, \{\Laux\}:=\{\Laux^1,\Laux^2,\Laux^3,\Laux^4\}$ such that the optimization problem becomes:
\begin{gather}
    \min_{\L, \{\Lx\}, \{\Laux\}, \S, \Z, \W} \sum_{n=1}^4\left(\psi_n\|\Lx_{(n)}^n\|_*+\theta g(\Laux^n, \Phi^n)\right)+\lambda\|\S\|_1+\nonumber\\ \gamma\|\Z\|_{1}, \qquad \text{s.t. }\; \W=\S, \quad \Z=\W\times_1 \Delta, \nonumber\\
  \P_{\Omega}[\L+\S]=\P_{\Omega}[\Y], \Lx^n=\L, \Laux^n=\L, n \in \{1,2,3,4\},
    \label{eq:opt_f}
\end{gather}
where $g(\Laux^n,\Phi^n)=\mathrm{tr}\left({\Laux^n_{(n)}}^\top \Phi^n \Laux^n_{(n)}\right)$ is the graph regularization term for each auxiliary variable $\Laux^n$. To solve the above optimization problem, we propose using ADMM with partial augmented Lagrangian:
\begin{gather}
    \sum_{n=1}^4\left(\psi_n\|\Lx_{(n)}^n\|_*+\theta g(\Laux^n, \Phi^n)\right)+\lambda\|\S\|_1+\gamma\|\Z\|_{1}+\nonumber \\\frac{\beta_1}{2}\|\P_{\Omega}[\L+\S-\Y-\Lambda_1]\|_F^2+ \frac{\beta_2}{2}\sum_{n=1}^4\|\Lx^n-\L-\Lambda_2^n\|_F^2+\nonumber \\\frac{\beta_3}{2}\sum_{n=1}^4\|\L-\Laux^n-\Lambda_3^n\|_F^2 + \frac{\beta_4}{2}\|\W\times_1 \Delta-\Z-\Lambda_4\|_F^2+\nonumber \\ \frac{\beta_5}{2}\|\S-\W-\Lambda_5\|_F^2,
    \label{eq:aug_lagr}
\end{gather}
where $\Lambda_1, \Lambda_2^n, \Lambda_3^n, \Lambda_4, \Lambda_5 \in\R^{I_1\times I_2\times I_3\times I_4}$ are the Lagrange multipliers.

\textbf{1. $\L$ update:} The low-rank variable $\L$ can be updated using:
\begin{gather}
    \L^{t+1} = \argmin_{\L} \frac{\beta_1}{2}\|\P_{\Omega}[\L+\S^t-\Y-\Lambda_1^t]\|_F^2+ \nonumber \\\sum_{n=1}^4\left(\frac{\beta_2}{2}\|\Lx^{n,t}-\L-\Lambda_2^{n,t}\|_F^2+\frac{\beta_3}{2}\|\L-\Laux^{n,t}-\Lambda_3^{n,t}\|_F^2\right) ,
\end{gather}
which has the analytical solution:
\begin{gather}
    \P_{\Omega}[\L^{t+1}] = \frac{\P_{\Omega}\left[\beta_1 \TT_1+\beta_2\TT_2+\beta_3\TT_3\right]}{\beta_1+4(\beta_2+\beta_3)},\\
    \P_{\Omega^\perp}[\L^{t+1}] = \frac{\P_{\Omega^\perp}\left[\beta_2\TT_2+\beta_3\TT_3\right]}{4(\beta_2+\beta_3},
\end{gather}
where $\TT_1=\Y-\S^t+\Lambda_1^t$, $\TT_2=\sum_{n=1}^4\Lx^{n,t}-\Lambda_2^{n,t}$ and $\TT_3=\sum_{n=1}^4\Laux^{n,t}+\Lambda_3^{n,t}$.

\textbf{2. $\Lx^n$ update:}
The variables $\Lx^n$ can be updated using:
\begin{gather}
    \Lx^{n,t+1} = \argmin_{\Lx^n} \psi_n\|\Lx_{(n)}^n\|_*+\frac{\beta_2}{2}\|\Lx^n-\L^{t+1}-\Lambda_{2}^{n,t}\|_F^2,
\end{gather}
which is solved by a soft thresholding operator on singular values of $\left(\L^{t+1}+\Lambda_2^{n,t}\right)_{(n)}$  with a threshold of $\psi_n/\beta_2$. 

\textbf{3. $\Laux^n$ update:} The variables $\Laux^n$ can be updated using:
\begin{gather}
    \Laux^{n,t+1}=\argmin_{\Laux^n} \theta \mathrm{tr}\left({\Laux^n_{(n)}}^\top \Phi^n\Laux^n_{(n)}\right) + \frac{\beta_3}{2}\|\L^{t+1}-\Laux^n-\Lambda_3^{n,t}\|_F^2,
\end{gather}
which is solved by:
\begin{gather}
    \Laux_{(n)}^{n,t+1} = \beta_3 G_{inv}\left(\L^{t+1}-\Lambda_3^{n,t}\right)_{(n)},
\end{gather}
where  $G_{inv}=\left(\theta \Phi^n+\beta_3 \I\right)^{-1}$ always exists and can be computed outside the loop for faster update.

\textbf{4. $\S$ update:}
The variable $\S$ can be updated using:
\begin{gather}
    \S^{t+1} = \argmin_{\S} \lambda\|\S\|_1+\frac{\beta_1}{2}\|\P_{\Omega}[\S+\L^{t+1}-\Y-\Lambda_1^t]\|_F^2+\frac{\beta_5}{2}\|\S-\W^t-\Lambda_5^t\|_F^2,
\end{gather}
where the Frobenius norm terms can be combined and the expression can be simplified into:
\begin{gather}
    \P_\Omega[\S^{t+1}] = \argmin_{\P_\Omega[\S]} \|\P_\Omega[\S]\|_1 +\frac{\beta_1+\beta_5}{2\lambda}\left\|\P_\Omega[\S-\TT_s]\right\|_F^2,\\
    \P_{\Omega^\perp}[\S^{t+1}] = \argmin_{\P_{\Omega^\perp}[\S]} \|\P_{\Omega^\perp}[\S]\|_1 +\frac{\beta_5}{2\lambda}\left\|\P_{\Omega^\perp}[\S-\TT_s]\right\|_F^2,
\end{gather}
where 
\begin{gather}
    \P_{\Omega}[\TT_s]=\P_{\Omega}\left[\frac{\beta_1(\Y-\L^{t+1}+\Lambda_1^t)+\beta_5(\W^t+\Lambda_5^t)}{\beta_1+\beta_5}\right]\nonumber\\
    \P_{\Omega^\perp}[\TT_s] = \P_{\Omega^\perp}[\W^t+\Lambda_5^t]\nonumber
\end{gather} 
The above is solved by setting $\P_{\Omega}[\S^{t+1}]=\eta(\P_{\Omega}[\TT_s], \frac{\lambda}{\beta_1+\beta_5})$ and $\P_{\Omega^\perp}[\S^{t+1}]=\eta(\P_{\Omega^\perp}[\TT_s], \frac{\lambda}{\beta_5})$,
where $\eta(\a, \phi) = sign(\a)\odot max(|\a|-\phi, 0)$ and $\odot$ is elementwise or Hadamard product.

\textbf{5. $\W$ update:} The auxiliary variable $\W$ can be updated using:
\begin{gather}
    \W^{t+1}=\argmin_{\W}\frac{\beta_4}{2}\|\W\times_1 \Delta-\Z^t-\Lambda_4^t\|_F^2+\frac{\beta_5}{2}\|\S^{t+1}-\W-\Lambda_5^t\|_F^2,
\end{gather}
which is solved analytically by taking the derivative of the expression given above and setting it to zero which results in:
\begin{gather}
    \W^{t+1}_{(1)} = W_{inv}\left(\beta_5(\S-\Lambda_5)_{(1)}+\beta_4 \Delta^\top(\Lambda_4+\Z)_{(1)}\right),
\end{gather}
where $W_{inv}=\left(\beta_5\I+\beta_4 \Delta^\top \Delta\right)^{-1}$ always exists and can be computed outside the loop for faster update.

\textbf{6. $\Z$ update:} The auxiliary variable $\Z$, can be updated using:
\begin{gather}
    \Z^{t+1} = \argmin_{\Z} \gamma\|\Z\|_{1}+\frac{\beta_4}{2}\|\W^{t+1}\times_1 \Delta-\Z-\Lambda_4^t\|_F^2
\end{gather}
which is solved by $\eta(\W^{t+1}\times_1 \Delta-\Lambda_4^t,\gamma/\beta_4)$.

\textbf{7. Dual updates:} Finally, dual variables $\Lambda_1, \Lambda_2^n, \Lambda_3^n, \Lambda_4, \Lambda_5$ are updated using:
\begin{gather}
    \Lambda_1^{t+1} = \Lambda_1^t-\P_\Omega[\L^{t+1}+\S^{t+1}-\Y],\\
    \Lambda_2^{n,t+1} = \Lambda_2^{n,t}-(\Lx^{n,t+1}-\L^{t+1}),\\
    \Lambda_3^{n,t+1} = \Lambda_3^{n,t}-(\L^{t+1}-\Laux^{n,t+1}),\\
    \Lambda_4^{t+1} = \Lambda_4^t-(\W^{t+1}\times_1 \Delta-\Z^{t+1}),\\
    \Lambda_5^{t+1} = \Lambda_5^t-(\S^{t+1}-\W^{t+1}).
\end{gather}
The pseudocode for the proposed algorithm, GLOSS, is given in Algorithm \ref{alg:gloss}. The optimization for LOSS can be similarly computed without the updates on graph regularization and related variables $\{\Laux\}, \{\Lambda_3\}$.
\renewcommand{\algorithmicrequire}{\textbf{Input:}}
\renewcommand{\algorithmicensure}{\textbf{Output:}}
\begin{algorithm}  
\caption{GLOSS}
\begin{algorithmic}
\REQUIRE $\Y \in \mathbb{R}^{I_1\times I_2\times I_3\times I_4}$, $\Omega$, $\Phi$, parameters $\lambda$, $\gamma$, $\theta$, $\{\psi\}$, $\beta_1$, $\beta_2$, $\beta_3$, $\beta_4$, $\beta_5$, $\text{T}$.
\ENSURE $\L$ : Low-rank tensor;  $\S$: Sparse tensor.
\STATE Initialize $\S^0=0$, $\Z^0=0$, $\Lx^{n,0}=0$, $\Laux^{n,0}=0$, $\Lambda_1^0=0$, $\Lambda_2^{n,0}=0$, $\Lambda_3^{n,0}=0$, $\Lambda_4^0=0$, $\Lambda_5^0=0$, $\forall i \in \{1,\dots,4\}$.
\STATE{$W_{inv} \gets \left(\beta_5\I+\beta_4 \Delta^\top \Delta\right)^{-1}$}
\STATE{$G_{inv} \gets \left(\theta \Phi^n+\beta_3 \I\right)^{-1}$}
\FOR{$t=0$ \TO $\text{T}$}
\STATE{$\TT_1 \gets \Y-\S^t+\Lambda_1^t$}
\STATE{$\TT_2 \gets \sum_{n=1}^4\Lx^{n,t}-\Lambda_2^{n,t}$}
\STATE{$\TT_3\gets\sum_{n=1}^4\Laux^{n,t}+\Lambda_3^{n,t}$}
\STATE{$\P_{\Omega}[\L^{t+1}] \gets \P_{\Omega}\left[\beta_1 \TT_1+\beta_2\TT_2+\beta_3\TT_3\right]/(\beta_1+4(\beta_2+\beta_3))$}
\STATE{$\P_{\Omega^\perp}[\L^{t+1}] \gets \P_{\Omega^\perp}\left[\beta_2\TT_2+\beta_3\TT_3\right]/4(\beta_2+\beta_3)$}
\FOR{$n=1$ \TO $4$}
\STATE{$[U, \Sigma, V] \gets$  SVD $\left(\L^{t+1}+\Lambda_2^{n,t}\right)_{(n)}$}
\STATE{$\hat{\sigma}_{i,i} \gets \max{(\sigma_{i,i}-\frac{\psi_n}{\beta_2},0)}\quad \forall i\in \{1,\dots,I_n\}$}
\STATE{$\Lx_{(n)}^{n,t+1} \gets U\hat{\Sigma}V^\top$}
\STATE{$\Laux_{(n)}^{n,t+1} \gets \beta_3 G_{inv}\left(\L^{t+1}-\Lambda_3^{n,t}\right)_{(n)} $}
\ENDFOR
\STATE{$\P_{\Omega}[\TT_s] \gets \P_{\Omega}\left[\frac{\beta_1(\Y-\L^{t+1}+\Lambda_1^t)+\beta_5(\W^t+\Lambda_5^t)}{\beta_1+\beta_5}\right]$}
\STATE{$\P_{\Omega^\perp}[\TT_s] \gets \P_{\Omega^\perp}[\W^t+\Lambda_5^t]$}
\STATE{$\P_{\Omega}[\S^{t+1}]\gets \eta(\P_{\Omega}[\TT_s], \frac{\lambda}{\beta_1+\beta_5})$}
\STATE{$\P_{\Omega^\perp}[\S^{t+1}]\gets \eta(\P_{\Omega^\perp}[\TT_s], \frac{\lambda}{\beta_5})$}
\STATE{$\W^{t+1}_{(1)}\! \gets\! W_{inv}\!\left(\beta_5(\S^{t+1}\!-\!\Lambda_5^t)_{(1)}\!+\beta_4 \Delta^\top(\Lambda_4^t+\Z^t)_{(1)}\right)$}
\STATE{$\Z^{t+1} \gets \eta(\W^{t+1}\times_1 \Delta-\Lambda_4^{t},\frac{\gamma}{\beta_4})$}
\STATE{$ \Lambda_1^{t+1} \gets \Lambda_1^t-\P_\Omega[\L^{t+1}+\S^{t+1}-\Y] $}
\STATE{$\Lambda_4^{t+1} \gets \Lambda_4^t-(\W^{t+1}\times_1 \Delta-\Z^{t+1})$}
\STATE{$\Lambda_5^{t+1} \gets \Lambda_5^t-(\S^{t+1}-\W^{t+1})$}
\FOR{$n=1$ \TO $4$}
\STATE{$\Lambda_2^{n,t+1} \gets \Lambda_2^{n,t}-(\Lx^{n,t+1}-\L^{t+1}) $}
\STATE{$\Lambda_3^{n,t+1} \gets \Lambda_3^{n,t}-({\L}^{t+1}-\Laux^{n,t+1}) $}
\ENDFOR
\ENDFOR 
\end{algorithmic}
\label{alg:gloss}
\end{algorithm}
\subsection{Convergence}
\label{sec:conv}
In this section, we analyze the convergence of the proposed algorithm. First, we show that the proposed optimization problem can be written as a two-block ADMM. In previous work, linear and global convergence of ADMM is proven for two-block systems \cite{deng2016global} with no dependence on the hyperparameters $\beta_i$. We use this proof to derive a sketch of the proof of convergence for GLOSS. In the following discussion, we will assume that there is no missing data, {i.e.} $\P_{\Omega}[\Y]=\Y$, to simplify the notation. 
\begin{proposition}
Let $h(\S) = \lambda\|\S\|_1$, $j(\Z)= \gamma\|\Z\|_1$, $f(\{\Lx\})= \sum_{n=1}^4\psi_n\|\Lx_{(n)}^n\|_*$, $g(\{\Laux\})= \sum_{n=1}^4\mathrm{tr} \left({\Laux^n_{(n)}}^\top \Phi^n \Laux^n_{(n)}\right)$,  (\ref{eq:opt_f}) can be rewritten as a two-block ADMM problem as:
\begin{gather}
    \min_{V_1, V_2} f_1(V_1)+f_2(V_2),\quad
    \text{s.t.} \; B_1 V_{1}+B_2 V_{2} = C,
\end{gather}
where $f_1(V_1) = 0$ and $f_2(V_2) = h(\S)+j(\Z)+f(\{\Lx\})+\theta g(\{\Laux\})$ are both convex, $V_1 = [\L_{(1)}, \W_{(1)}]$, $V_2 = [\X_{(1)}, \text{cat}_1(\{\Lx\}), \text{cat}_1(\{\Laux\}), \Z_{(1)}]$ and $C= \text{cat}_1(\{\Y,0,\dots,0\})$. 
\end{proposition}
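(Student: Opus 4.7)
The plan is to verify the reformulation by checking three things: (i) every constraint in (\ref{eq:opt_f}) is linear and couples a variable in $V_{1}$ with a variable in $V_{2}$, so the full constraint set collapses to a single equation $B_{1}V_{1}+B_{2}V_{2}=C$; (ii) the objective of (\ref{eq:opt_f}) contains no term in $\L$ or $\W$, so the splitting $f_{1}=0$, $f_{2}=h+j+f+\theta g$ is exact; and (iii) $f_{2}$ is convex. Assuming no missing data, I would first list the eleven constraints in (\ref{eq:opt_f}): $\L+\S=\Y$, $\Lx^{n}=\L$ and $\Laux^{n}=\L$ for $n=1,\ldots,4$, $\W=\S$, and $\Z=\W\times_{1}\Delta$. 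Each of these involves exactly one of $\{\L,\W\}$ on one side and one of $\{\S,\Lx^{n},\Laux^{n},\Z\}$ on the other, which is the structural property that makes a clean two-block split possible.

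Next I would build $B_{1}$ and $B_{2}$ by stacking mode-$1$ unfoldings of each constraint as rows of a block operator. For example, $\L+\S=\Y$ becomes $\I\cdot\L_{(1)}+\I\cdot\S_{(1)}=\Y_{(1)}$; the four constraints $\Lx^{n}=\L$ jointly become $-\I\cdot\L_{(1)}+\I\cdot\mathrm{cat}_{1}(\{\Lx\})=0$ after vertically replicating $\L_{(1)}$; the four constraints $\L=\Laux^{n}$ are handled symmetrically; $\W=\S$ becomes $\I\cdot\W_{(1)}-\I\cdot\S_{(1)}=0$; and $\Z=\W\times_{1}\Delta$ becomes $\Delta\cdot\W_{(1)}-\I\cdot\Z_{(1)}=0$ using the fact that mode-$1$ product is ordinary left matrix multiplication on the mode-$1$ unfolding. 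Stacking these rows produces a $B_{1}$ with block columns acting on $\L_{(1)}$ and $\W_{(1)}$, a $B_{2}$ with block columns acting on $\S_{(1)},\mathrm{cat}_{1}(\{\Lx\}),\mathrm{cat}_{1}(\{\Laux\}),\Z_{(1)}$, and a right-hand side $C=\mathrm{cat}_{1}(\{\Y,0,\ldots,0\})$ as claimed.

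With the constraint formulation in place, the objective split is immediate: inspecting (\ref{eq:opt_f}) shows that $\L$ and $\W$ appear only in constraints, so $f_{1}(V_{1})\equiv 0$ is trivially convex, while $f_{2}(V_{2})=h(\S)+j(\Z)+f(\{\Lx\})+\theta g(\{\Laux\})$ is convex because each of the $\ell_{1}$ and weighted nuclear norm terms is a norm, and $g(\{\Laux\})$ is a sum of quadratic forms $\mathrm{tr}({\Laux^{n}_{(n)}}^{\top}\Phi^{n}\Laux^{n}_{(n)})$ with positive semidefinite kernels $\Phi^{n}$ (since graph Laplacians are PSD). Summation preserves convexity, and $\theta>0$.

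The main subtlety, rather than an obstacle, is notational bookkeeping. The tensor equalities $\Lx^{n}=\L$ must be rewritten as equalities of mode-$1$ unfoldings $\Lx^{n}_{(1)}=\L_{(1)}$ so they can live in the same matrix system as the constraint on $\Z$ (which naturally lives in mode-$1$). This is valid because two tensors are equal iff any fixed mode unfolding agrees, and the objective terms $\|\Lx^{n}_{(n)}\|_{*}$ and $g(\Laux^{n},\Phi^{n})$ are invariant to the unfolding used to \emph{encode} the equality — they are functions of the underlying tensor. Once this is observed, assembling $B_{1}$, $B_{2}$, and $C$ as above yields the stated two-block reformulation, completing the proof.
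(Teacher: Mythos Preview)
Your reformulation argument is correct and matches the paper's construction of $B_1$, $B_2$, and $C$: the paper writes out explicit block matrices $A_1,\dots,A_6$ (one per primal variable), then sets $B_1=[A_1,A_5]$ and $B_2=[A_2,A_3,A_4,A_6]$, which is exactly your row-stacking procedure made concrete. Your convexity check for $f_2$ is also fine (the paper in fact leaves it implicit).

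However, the paper's proof does more than the proposition's literal wording requires, and this extra content is what Theorem~1 actually consumes. Specifically, the paper (i) records the orthogonality relations $A_1^{\top}A_5=0$ and $A_i^{\top}A_j=0$ for distinct $i,j\in\{2,3,4,6\}$, which is what makes the joint $V_1$- and $V_2$-updates of two-block ADMM decouple into the separate per-variable updates of Algorithm~\ref{alg:gloss}; and (ii) notes that $A_2^{\top}A_5\neq 0$, so placing $\W$ in $V_1$ forces the $\S$ and $\W$ updates to swap order relative to Algorithm~\ref{alg:gloss}, and then invokes \cite{yan2016self} to argue the swap is harmless because the objective is constant (hence affine) in $\W$. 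Without (i)--(ii), you have shown that the \emph{problem} is two-block, but not that Algorithm~\ref{alg:gloss} \emph{is} the corresponding two-block ADMM iteration---and Theorem~1 is a statement about the iterates of Algorithm~\ref{alg:gloss}. If your goal is only the proposition as literally stated, you are done; if you intend to feed this into Theorem~1, you should add these two observations.
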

\begin{proof}
Using the assumptions above, (\ref{eq:opt_f}) can be rewritten as:
\begin{gather}
    \min_{\{\Lx\},\{\Laux\},\S,\Z} h(\S)+j(\Z)+f(\{\Lx\})+\theta g(\{\Laux\}),\nonumber \\
    A_1\L_{(1)}+A_2 \S_{(1)} + A_3\text{cat}_1(\{\Lx\})+A_4 \text{cat}_1(\{\Laux\})+ \nonumber \\A_5\W_{(1)}+ A_6 \Z_{(1)}=\text{cat}_1(\{\Y,0,\dots,0\}),
\end{gather}
where
\setcounter{MaxMatrixCols}{11}
\begin{gather}
    A_1 = 
    \begin{bmatrix}
    \I & \I & \I & \I& \I & \I & \I & \I & \I & 0 & 0
    \end{bmatrix}^\top,
    \nonumber\\
    A_2 = 
    \begin{bmatrix}
    \I & 0 & 0 & 0 & 0 & 0 & 0 & 0 & 0 & \I & 0
    \end{bmatrix}^\top,\nonumber \\
    A_3 = 
    \begin{bmatrix}
    0 & -\I & 0  & 0  & 0  & 0 & 0 & 0 & 0 & 0 & 0 \\
    0 & 0  & -\I & 0  & 0  & 0 & 0 & 0 & 0 & 0 & 0 \\
    0 & 0  & 0  & -\I & 0  & 0 & 0 & 0 & 0 & 0 & 0 \\
    0 & 0  & 0  & 0  & -\I & 0 & 0 & 0 & 0 & 0 & 0
    \end{bmatrix}^\top,\nonumber 
\end{gather}
\vspace{-1em}
\begin{gather}
    A_4 = 
    \begin{bmatrix}
        0 & 0 & 0 & 0 & 0 & -\I & 0 & 0 & 0 & 0 & 0 \\
        0 & 0 & 0 & 0 & 0 & 0 & -\I & 0 & 0 & 0 & 0 \\
        0 & 0 & 0 & 0 & 0 & 0 & 0 & -\I & 0  & 0 & 0 \\
        0 & 0  & 0  & 0  & 0  & 0 & 0 & 0 & -\I & 0 & 0
    \end{bmatrix}^\top,\nonumber \\
    A_5 = 
    \begin{bmatrix}
    0 & 0 & 0 & 0 & 0 & 0 & 0 & 0 & 0 & -\I & \Delta
    \end{bmatrix}^\top, \nonumber\\
    A_6 = 
    \begin{bmatrix}
    0 & 0 & 0 & 0 & 0 & 0 & 0 & 0 & 0 & 0 & -\I
    \end{bmatrix}^\top.\nonumber 
\end{gather}
From this reformulation, it is easy to see that $A_1^\top A_5=0$ and $A_2, A_3, A_4, A_6$ are all orthogonal to each other, {i.e.} $A_i^\top A_j=0$ where $i, j \in \{2,3,4,6\}$ and $i\neq j$.

Let 
matrices $B_1 = [A_1, A_5]$  and $ B_2 = [A_2, A_3, A_4, A_6]$. Since the variable $V_1$ consists of $\L$ and $\W$, the order of updating $\S$ and $\W$ needs to change in the new formulation as $A_2$ and $A_5$ are not orthogonal. In other words, while the updates of  $\{\Lx\}$ and $\{\Laux\}$ have no effect on the update of $\W$, this is not true for $\S$. Therefore, updating $\W$ before $\S$ might affect the solution. However, it was proven in \cite{yan2016self} that the change in the order of updates gives the equivalent solution if either one of the functions of the variables $\S$ and $\W$ is affine. This is true for \eqref{eq:opt_f}, as the function corresponding to $\W$ is a constant. Thus, the problem reduces to the two-block form:
\begin{gather}
    \min_{V_1, V_2} f_1(V_1)+f_2(V_2),\quad
    \text{s.t.} \; B_1 V_{1}+B_2 V_{2} = C.
\end{gather}
\end{proof}

\begin{theorem}
Let $\psi_n, \theta, \lambda, \gamma \geq 0$. The sequence $\{\L^t, \S^t\}$ generated by Algorithm \ref{alg:gloss} converges to the optimal solution of \eqref{eq:opt_f} $\{\L^*, \S^*\}$. 
\end{theorem}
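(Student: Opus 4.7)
The plan is to reduce the proof to the established convergence theory for two-block ADMM, exploiting the reformulation already accomplished in Proposition 1. There, (\ref{eq:opt_f}) has been rewritten in the canonical form $\min_{V_1,V_2} f_1(V_1)+f_2(V_2)$ subject to $B_1 V_1 + B_2 V_2 = C$, with $f_1 \equiv 0$ and $f_2 = h+j+f+\theta g$. My first step is to verify the regularity conditions required by the classical two-block ADMM convergence theorem \cite{deng2016global}: both $f_1$ and $f_2$ are proper, closed, and convex; the subproblems are well-posed; and a saddle point of the Lagrangian exists. Convexity is immediate since the nuclear norm and $\ell_1$ terms are convex and the quadratic graph term $\mathrm{tr}({\Laux^n_{(n)}}^\top \Phi^n \Laux^n_{(n)})$ is convex as $\Phi^n \succeq 0$ and $\theta \geq 0$; closedness and properness hold trivially. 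Existence of a saddle point follows because the feasible set is affine and nonempty (pick any $\L,\S$ and set the auxiliary variables accordingly), so Slater's condition holds and strong duality is in force.

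Next I would argue that the per-block minimizations demanded by two-block ADMM coincide with the variable-by-variable updates listed in Algorithm \ref{alg:gloss}. Within each block, the orthogonality relations $A_1^\top A_5 = 0$ and $A_i^\top A_j = 0$ for $i,j\in\{2,3,4,6\}$, already established in Proposition 1, decouple the joint block minimization into the separate per-variable minimizations, so a Gauss--Seidel sweep over the individual variables produces exactly the block minimizer. The only non-orthogonal coupling is between $\S\in V_2$ and $\W\in V_1$, and this is handled by the affine-function reordering argument of \cite{yan2016self} invoked in Proposition 1, since the function associated with $\W$ is constant. Each individual update is itself well-defined in closed form: $\S$ and $\Z$ reduce to soft-thresholding, $\{\Lx^n\}$ to singular-value thresholding, while $\{\Laux^n\}$ and $\W$ reduce to linear solves against the strictly positive-definite matrices $\theta\Phi^n+\beta_3 I$ and $\beta_5 I + \beta_4 \Delta^\top\Delta$, and $\L$ is a closed-form weighted average.

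With this equivalence established, the global convergence theorem for two-block ADMM, which holds independently of the penalty parameters $\beta_i$, gives $(V_1^t,V_2^t)\to(V_1^*,V_2^*)$ with $(V_1^*,V_2^*)$ optimal for the two-block reformulation. Extracting the $\L$- and $\S$-coordinates then yields $\{\L^t,\S^t\}\to\{\L^*,\S^*\}$ with $(\L^*,\S^*)$ optimal for (\ref{eq:opt_f}). The main obstacle I expect is the bookkeeping in the second step: rigorously verifying that the Gauss--Seidel sweep in Algorithm \ref{alg:gloss} genuinely implements a \emph{two-block} ADMM iteration rather than a multi-block one, since direct multi-block ADMM is known to diverge in general without additional assumptions. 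Once the orthogonality-based decoupling and the affine reordering are carefully checked, no new inequalities or descent estimates are required, and the conclusion follows from the cited convergence result.
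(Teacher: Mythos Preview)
Your proposal is correct and follows the same route as the paper: invoke Proposition~1 to cast \eqref{eq:opt_f} as a two-block ADMM and then appeal to the convergence result of \cite{deng2016global}. You supply considerably more detail than the paper---verifying convexity, closedness, Slater's condition, the orthogonality-based decoupling of the per-variable updates into block minimizers, and well-posedness of each subproblem---but the underlying argument is identical.
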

\begin{proof}
By Proposition 1, \eqref{eq:opt_f} can be written as two-block ADMM formulation. The proof of  convergence of two-block ADMM follows \cite{deng2016global}.
\end{proof}

\subsection{Computational Complexity}
Let $\Y$ be a mode $N$ tensor  with number of elements $I=\prod_{n=1}^NI_n$. The computational complexity of each iteration of ADMM is computed as follows:
\begin{enumerate}
    \item The update of $\L$ only involves element-wise operations, thus the computational complexity is linear, i.e. $\O(I)$.
    \item The computational complexity of updating each $\Lx^n$ is $\O\left(I_nI\right)$. Thus, the total computational complexity is $\O(\sum_{n=1}^NI_nI)$. However, it is possible to reduce this cost by parallelizing the updates across modes, hence the complexity becomes $\O\left(\max_n(I_n)I\right)$. 
    \item The update of each $\Laux^n$ requires the computation of a matrix inverse with complexity $O(I_n^3)$ and a matrix multiplication with complexity $\O(I)$. As mentioned before, the inverse always exists and can be computed outside the loop.  Similar to the updates of $\Lx^n$, the total complexity is $\O(NI)$. This can be reduced  to $\O(I)$ by parallelizing across modes.
    \item The second update requires a soft thresholding, which has linear complexity, {i.e.}, $\O(I)$.
    \item The computational complexity of updating $\W$ is governed by matrix multiplication resulting in $\O(I)$ complexity.
    \item The update of $\Z$ consists of a matrix product followed by soft thresholding, which results in a complexity of $\O(I)$.
    \item The updates of the dual variables  do not require any additional multiplication operations. Thus, the computational complexity is negligible.
\end{enumerate}
It can be concluded that the complexity of each loop is governed by the updates of $\Lx^n$. Assuming that the algorithm iterates  $\text{T}$ times, the total computational complexity of the algorithm is $\O(\text{T}\sum_{n=1}^NI_nI)$, therefore, close to linear complexity in the number of elements. 
Since this complexity is dominated by the nuclear norm minimization, LOSS, HoRPCA and WHoRPCA also have similar computational complexities. 

\subsection{Anomaly Scoring}

The method proposed in this paper focuses on extracting spatiotemporal features for anomaly detection.  After extracting the features, {i.e.} the sparse part,  a baseline anomaly detector can be applied to obtain an anomaly score. In this paper, we evaluated three anomaly detection methods: Elliptic Envelope (EE) \cite{rousseeuw1999fast}, Local Outlier Factor (LOF) \cite{breunig2000lof} and One Class SVM (OCSVM) \cite{scholkopf2000support}.  These three methods are used to assign an anomaly score to each element of the sparse tensor. Each method was applied to all third mode fibers which correspond to different weeks' traffic activity. This is equivalent to fitting a univariate distribution to each of the third mode fibers of the tensor. The anomaly scores were used to create an anomaly score tensor. 
Finally, the elements with the highest anomaly scores were selected as anomalous while the rest were determined to be normal. 

\section{Experiments}
\label{sec:Exp}

In this paper, we evaluated the proposed method on both real and synthetic datasets. We compared our method to regular HoRPCA and weighted HoRPCA (WHoRPCA) where the nuclear norm of each unfolding is weighted. In addition, to evaluate the effect of graph regularization term in \eqref{eq:finopt}, we compared with LOSS corresponding to the objective function in \eqref{eq:opt_lrt}. As our method is focused on feature extraction for anomaly detection, we also compared our method to baseline anomaly detection methods such as EE, LOF and OCSVM applied to the original tensor. After the feature extraction stage, unless noted otherwise, such as "GLOSS-LOF", EE was used as the default method for computing anomaly scores to compare the different approaches. The number of neighbors for LOF is selected as $10$ as this is the suggested lower-bound in \cite{zhang2016tensor}. The outlier fraction of OCSVM is set to $0.1$ as only the anomaly scores, not labels, generated by OCSVM are used in the experiments. The methods used for comparison and their properties are summarized in Table \ref{tab:methods}. 
\begin{table}[H]
\scriptsize
    \centering
    \begin{tabular}{l c c c c c}
    \toprule
         & LR & SP & WLR & TS & GR \\
         \midrule
         GLOSS & + & + & + & + & + \\
         LOSS &  + & + & + &+ &- \\
         WHoRPCA & + & + & + & - & - \\
         HoRPCA & + & + & - & - & - \\
         EE & N/A & N/A & N/A & N/A & N/A \\
         LOF & N/A & N/A & N/A & N/A & N/A \\
         OCSVM & N/A & N/A & N/A & N/A & N/A \\
     \bottomrule
    \end{tabular}
    \caption{Properties of anomaly detection methods used in the experiments. The acronyms refer to the different regularization terms in the cost function: (LR) low-rank, (SP) sparse, (WLR) weighted low-rank, (TS) temporal smoothness, (GR) graph regularization.}
    \label{tab:methods}
\end{table}

For each data set, a varying $K$ percent of the elements with highest anomaly scores are determined to be  anomalous. With varying $K$, ROC curves were generated for synthetic data and the mean area under the curve (AUC) was computed for $10$ random experiments. For real data, number of detected events were reported for varying $K$.  

\subsection{Data Description}
To evaluate the proposed framework, we use two publicly available datasets as well as synthetic data.
\subsubsection{Real Data}
The first dataset is NYC yellow taxi trip records\footnote{https://www1.nyc.gov/site/tlc/about/tlc-trip-record-data.page} for 2018. This dataset consists of  trip information such as the departure zone and time, arrival zone and time, number of passengers, tips for each yellow taxi trip in NYC. In the following experiments, we only use the arrival zone and time to collect the number of arrivals for each zone aggregated over one hour time intervals. We selected 81 central zones to avoid zones with very low traffic \cite{zhang2019decomposition}. Thus, we created a tensor $\Y$ of size $24\times 7 \times 52 \times 81$ where the first mode corresponds to hours within a day, the second mode corresponds to days of a week, the third mode corresponds to weeks of a year and the last mode corresponds to the zones. 

The second dataset is Citi Bike NYC bike trips\footnote{https://www.citibikenyc.com/system-data} for 2018. This dataset contains the departure and arrival station numbers, arrival and departure times and user id. In our experiments, we aggregated bike arrival data for taxi zones imported from the NYC yellow taxi trip records dataset, instead of using the original stations, to reduce the dimensionality and to avoid data sparsity. The resulting data tensor is of size $24\times 7 \times 52 \times 81$. Some statistical characteristics for both data sets are provided in Table \ref{tab:data_stat}.

\begin{table}[H]   
\scriptsize
    \centering
    \begin{tabular}{l c  c c}
    \toprule
         & & NYC Yellow Taxi & Citibike Bikeshare\\
         \midrule
        \multirow{4}{*}{Mean of Std. Dev.}& $n=1$& $157.8$ &$156.9$\\
        &$n=2$& $167.1$ & $392.8$\\
        &$n=3$&$168.5$&$64.0$\\
        &$n=4$&$112.0$&$49.0$\\
        \midrule
        \multicolumn{2}{c}{Sparsity Level} & $0.265$ & $0.165$\\
        \multicolumn{2}{c}{Maximum Value} & $1788$ & $661$ \\
        \multicolumn{2}{c}{Average Value} & $43.95$ & $23.16$ \\
        \bottomrule
    \end{tabular}
    \caption{Statistical characteristics of the real data sets.}
    \label{tab:data_stat}
\end{table}

\subsubsection{Synthetic Data Generation}

Evaluating urban events in a real-world setting is an open challenge, since it is difficult to obtain urban traffic data sets with ground truth information, {i.e.} anomalies are not known {\em a priori} and what constitutes an anomaly depends on the nature of the data and the specific problem. To be able to evaluate our method quantitatively, we  generate synthetic data and inject anomalies. Following \cite{zhang2019decomposition}, we generated a synthetic data set by taking the average of the NYC taxi trip tensor, $\Y$, along the third mode, {i.e.} across weeks of a year. We then repeat the resulting three-mode tensor such that for each zone, average data for a week is repeated 52 times. We multiply each element of the tensor by a Gaussian random variable with mean $1$ and variance $0.5$ to create variation across weeks. We then inject anomalies to the data in randomly selected time intervals during a day by adding or subtracting a constant $c$ times the average of that time interval. A higher $c$ value will generate more separable anomalies as the difference between the values of anomalous and normal intervals will be high. In our experiments, from $81$ zones, we select random time intervals of duration $7$ for $700$ days. Thus, in total, there are $4900$ anomalous time points for the whole synthetic data. 

\subsection{Parameter Setting}
In this section, we will discuss how the different parameters in \eqref{eq:finopt} are selected. Following \cite{goldfarb2014robust}, we set $\lambda=1/\sqrt{\max(I_1,\dots,I_N)}$ for HoRPCA and $\beta_1=\frac{1}{5 \text{std}(vec(\Y))}$, where $vec(\Y)\in\R^{I_1I_2...I_N}$ is the vectorization of $\Y$ and $\text{std(.)}$ is the standard deviation. The other $\beta$ parameters for all methods are set to be the same as $\beta_1$. The selection of $\beta$ parameters does not affect the algorithm performance but changes the convergence rate as mentioned in Section \ref{sec:conv}.

\begin{figure}[H]
    \centering
    \begin{subfigure}[b]{.32\textwidth}
        \centering
        \includegraphics[width=\columnwidth]{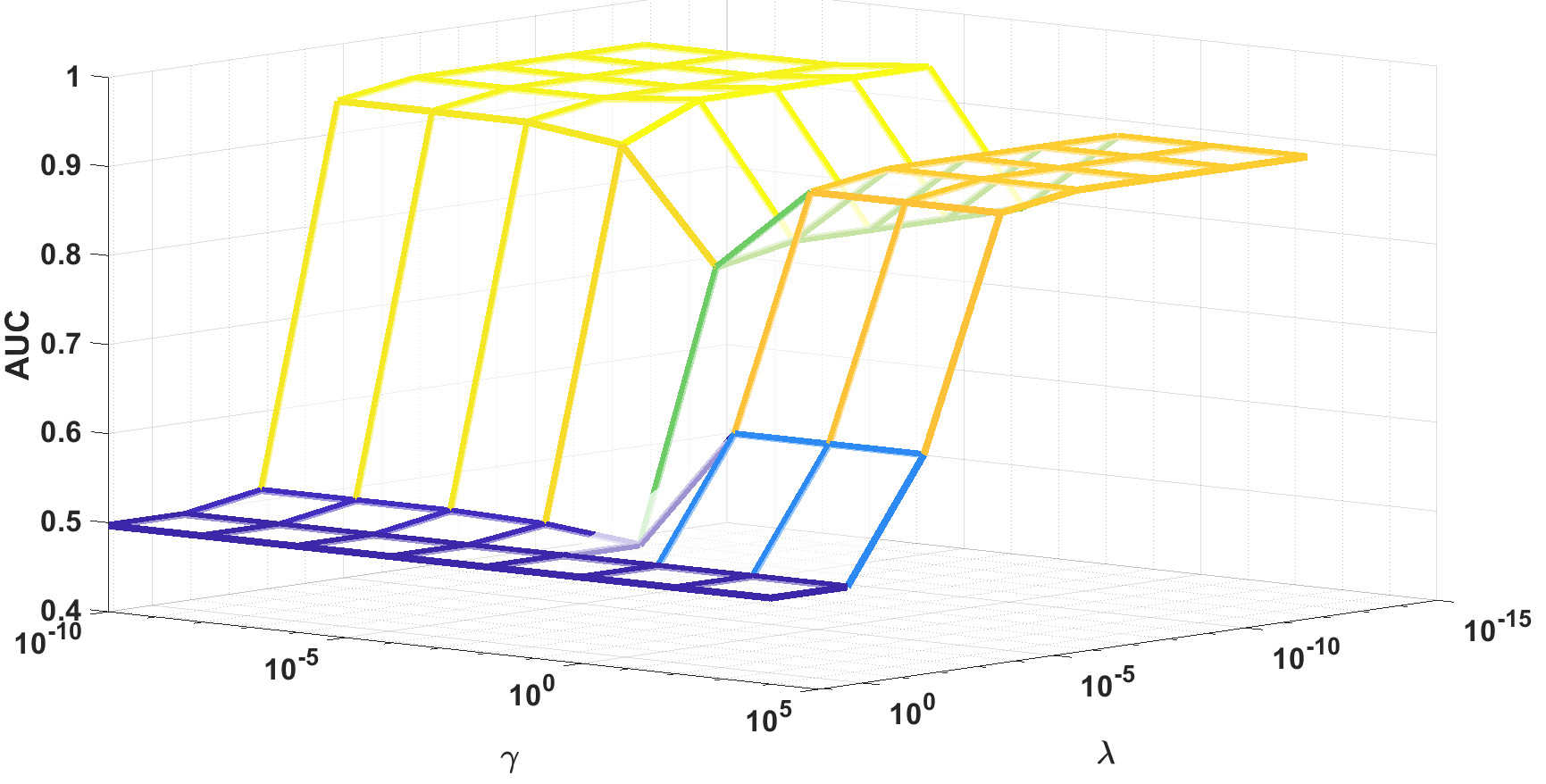}
        \caption{}
        \label{fig:param_gamvslam}    
    \end{subfigure}
    \begin{subfigure}[b]{.32\textwidth}
        \centering
        \includegraphics[width=\columnwidth]{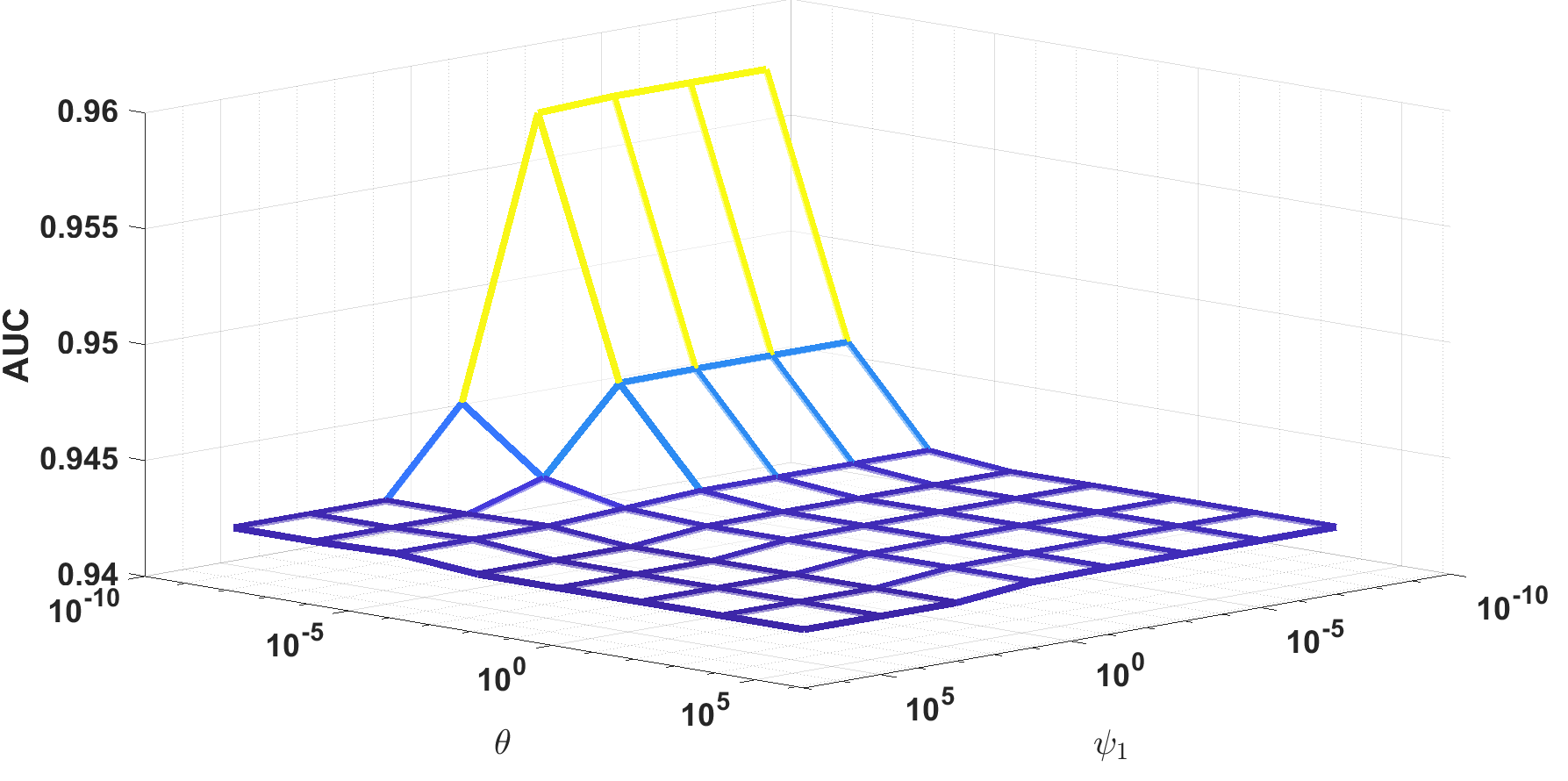}
        \caption{}
        \label{fig:param_thetavspsi1}    
    \end{subfigure}
    \begin{subfigure}[b]{.32\textwidth}
        \centering
        \includegraphics[width=\columnwidth]{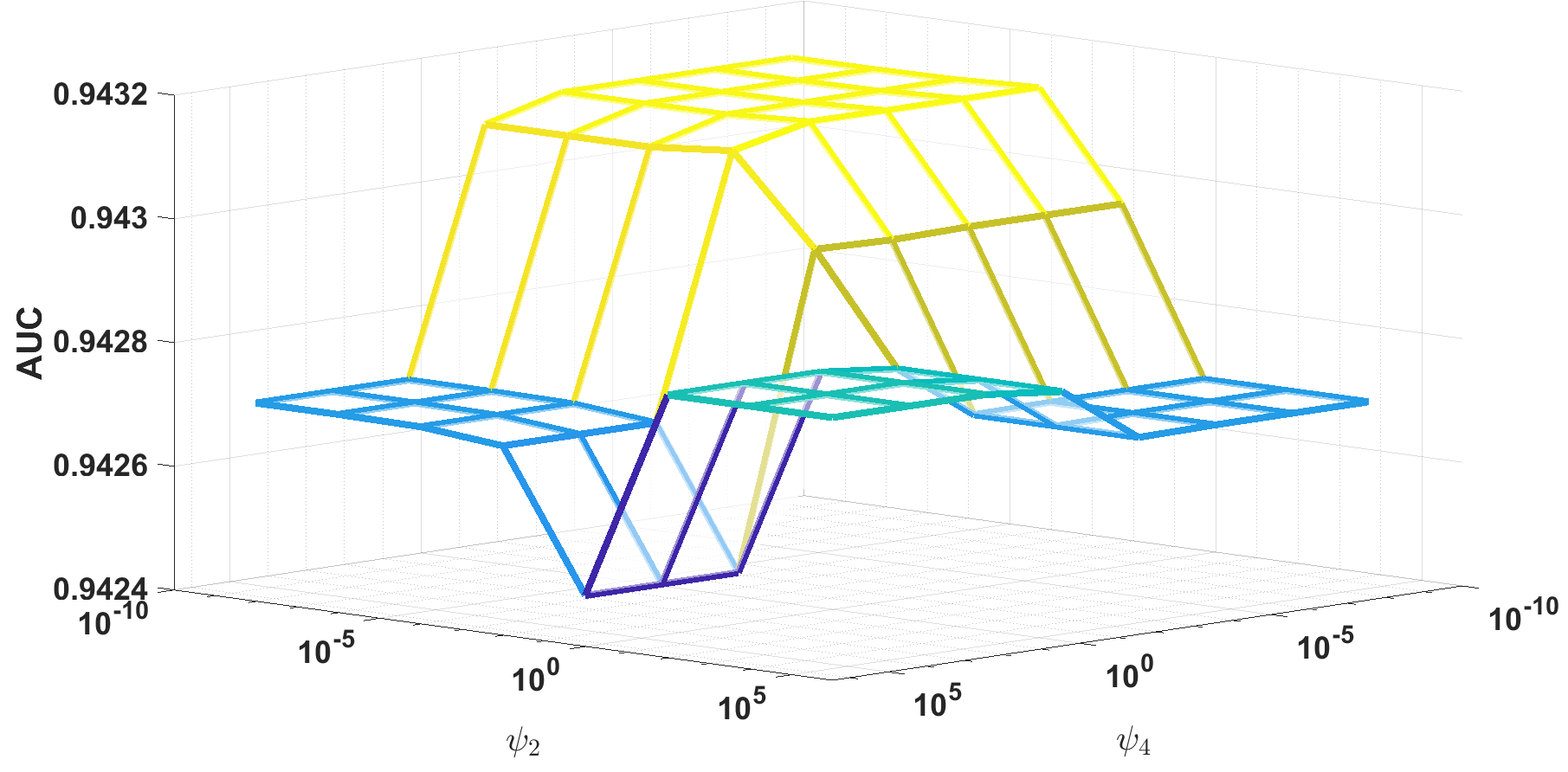}
        \caption{}
        \label{fig:param_psi2vs4}    
    \end{subfigure}
    \caption{Mean AUC values for various choices of: (a) $\lambda$ and $\gamma$, (b) $\theta$ and $\psi_1$, (c) $\psi_2$ and $\psi_4$. Mean AUC values across 10 random experiments are reported for each hyperparameter pair. For each set of experiments, the remaining hyperparameters are fixed.}
    \label{fig:param}
\end{figure}
Since the proposed algorithm requires the selection of multiple parameters, we perform a sensitivity analysis for the different hyperparameters. In Figure \ref{fig:param}, we present the average AUC for various ranges of the hyperparameters for GLOSS applied on the synthetic data generated with $c=2.5$. It can be seen from Figure \ref{fig:param_gamvslam} that while low values of $\lambda$ always provide the best results, $\gamma$ values are optimized  around $10^{-5}$. Increasing the sparsity penalty $\lambda$ above $10^{-3}$ results in a sparse tensor that is mostly  zero. At this $\lambda$ value, AUC becomes equal to 0.5 which is equivalent to randomly guessing the anomalous points. On the other hand, when $\gamma$ is too large, it smooths out all mode-1 fibers and generates the same anomaly score for each fiber. This is akin to identifying anomalous days, rather than time intervals. Figs. \ref{fig:param_thetavspsi1} and \ref{fig:param_psi2vs4} indicate that the performance does not vary much with the different choice of the weighted nuclear norm parameters and the choice of $\theta$ parameter. We repeated this analysis for different $c$ values and observed similar results indicating that the  proposed method is not sensitive to the selection of hyperparameters as long as they are selected within a certain range. Thus, we select $\lambda = \gamma = 1/\|\P_\Omega(\Y)\|_0$, where $\|\P_\Omega(\Y)\|_0$ is the number of nonzero elements of $\Y$ for GLOSS and $\lambda=\gamma=1/max(I_1,\dots, I_N)$ for LOSS and WHoRPCA. Since the ranks across each mode are closely related to the variance of the data within that mode, weights for each mode in the definition of nuclear norm, $\psi_n$s,  are selected to be inversely  proportional to the trace of the square root of the covariance matrix of mode-$n$, {i.e.} $\psi_{n}=\frac{p}{Tr(\sqrt{\Sigma_{Y_{(n)}})}}$, where $p$ is selected such that $\min_n(\psi_n)=1$ (For synthetic data with $c=2.5$, $n=3$). The parameter $\theta$ is set to be the geometric mean of $\psi_n$s, i.e. $\theta=\prod_{n=1}^4\psi_n^{1/4}$.
An important observation about the selection of the parameters is that depending on the data, the optimal values of hyperparameters might be different. This is due to the properties of the data such as size, variance and sparsity level. A cross-validation for determining the optimal parameters is not feasible as there are 7 parameters and ground truth labels for anomalies are rarely available for  real data. The parameters proposed in the previous paragraph provide good results for various tested data. 


\subsection{Experiments on Synthetic Data}
\subsubsection{Robustness against noise}
\begin{figure}[H]
    \centering
    \begin{subfigure}[b]{.32\textwidth}
        \includegraphics[width=.98\columnwidth]{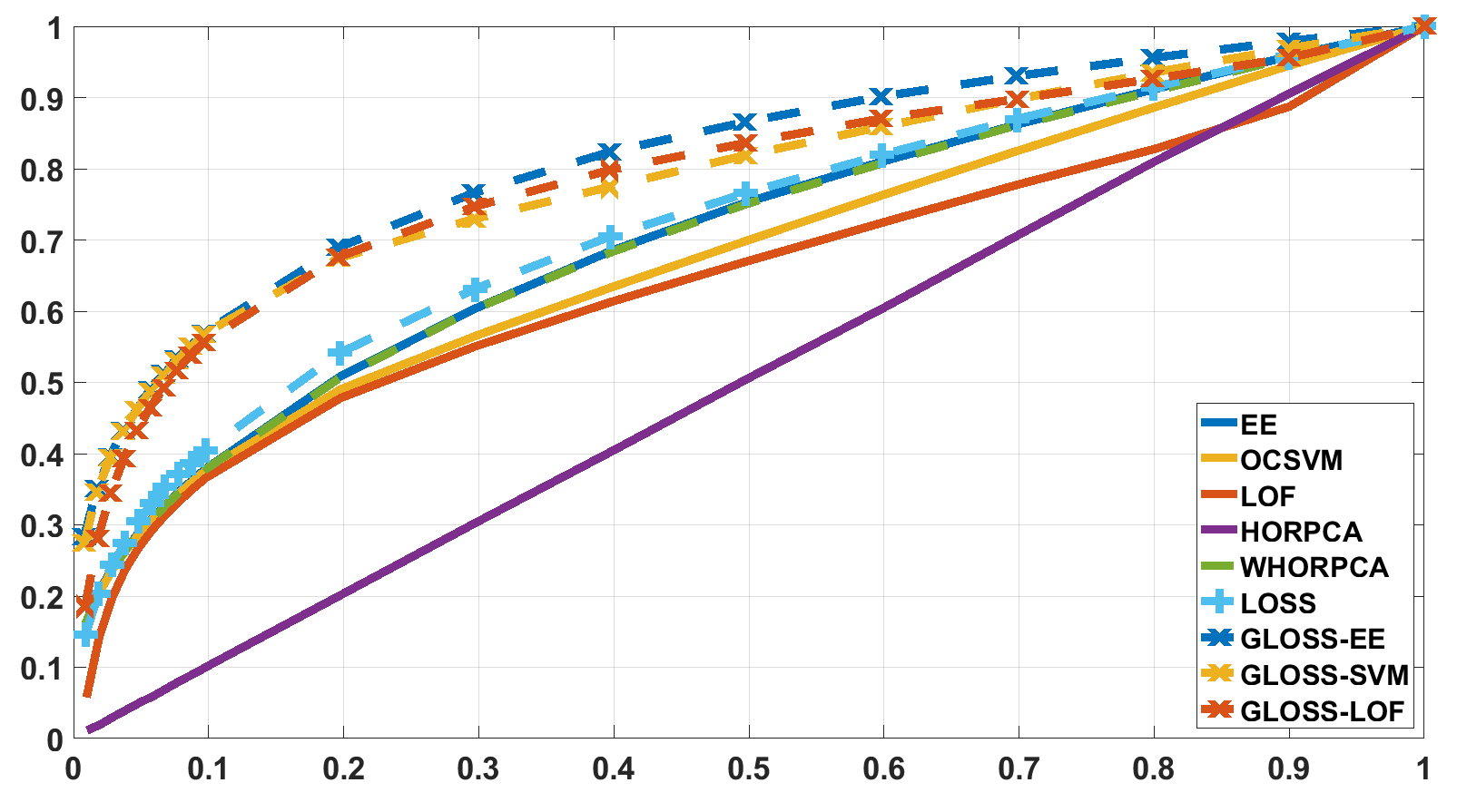}
        \caption{}
        \label{fig:amp1}
    \end{subfigure}
    \begin{subfigure}[b]{.32\textwidth}
        \includegraphics[width=.98\columnwidth]{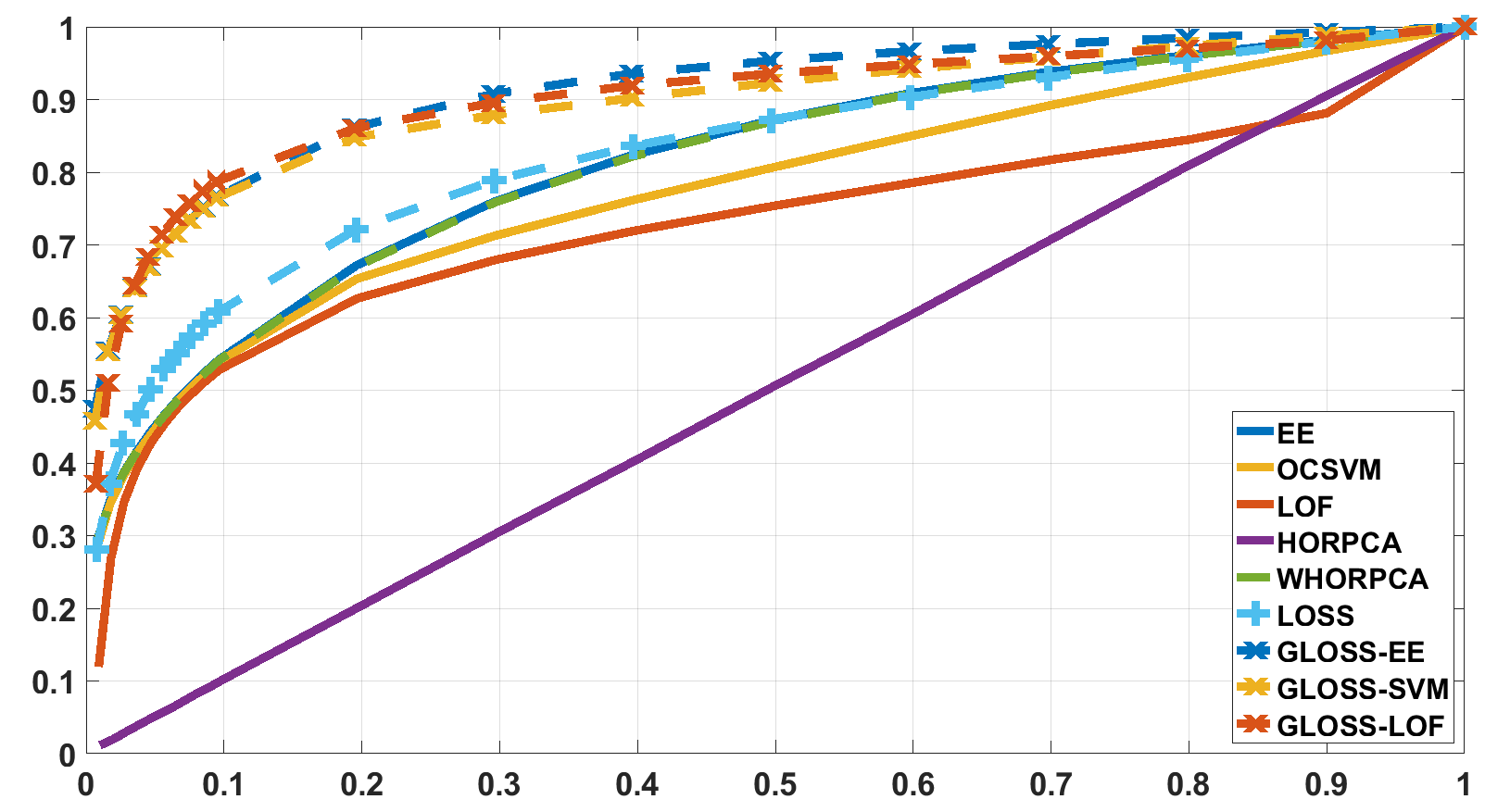}
        \caption{}
        \label{fig:amp2}
    \end{subfigure}
    \begin{subfigure}[b]{.32\textwidth}
        \includegraphics[width=.98\columnwidth]{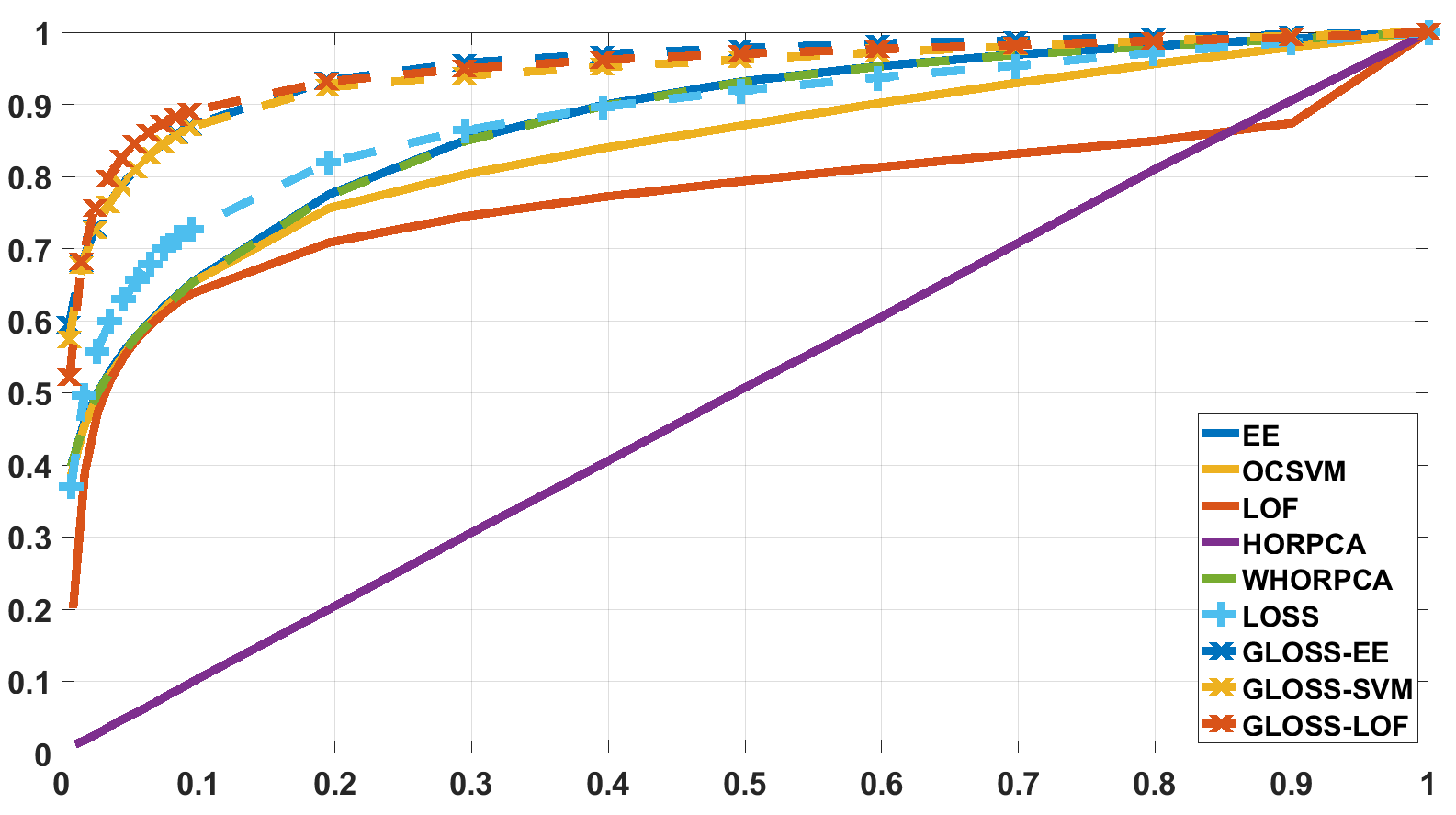}
        \caption{}
        \label{fig:amp3}
    \end{subfigure}
    \caption{ROC curves for various amplitudes of anomalies. Higher amplitude means more separability. $c=$ (a) $1.5$, (b) $2$, (c) $2.5$. ($P=0\%$)}
    \label{fig:amp}
\end{figure}
We first evaluated the effect of $c$, {i.e.} the strength of the anomaly, on the accuracy of the proposed method.  Low $c$ values imply that the amplitudes of anomalies are low and they may be indistinguishable from noise. From Figure \ref{fig:amp} and Table \ref{tab:nyc_auc}, it can be seen that for varying $c$ values, our method (GLOSS-EE) achieves the highest AUC values compared to both baseline methods and HoRPCA, WHoRPCA and LOSS. Among the remaining methods, LOSS performs better than WHoRPCA, especially when the anomaly strength is small. It is also important to note that the choice of the anomaly scoring method does not change the performance of GLOSS significantly. 

\begin{table}[H]
\scriptsize
    \begin{adjustbox}{width=\textwidth}
    \begin{tabular}{lccc|ccc}
        \toprule
        & $c=1.5$ & $c=2$ & $c=2.5$ & $P = 20\%$ & $P = 40\%$ & $P = 60\%$ \\
        \midrule
        EE      & $0.70\!\pm\! 0.003$ & $0.81\!\pm\!0.004$  & $0.87\!\pm\!0.002$ & $0.81\!\pm\!0.004$ & $0.61\!\pm\!0.008$ & $0.53\!\pm\!0.01$  \\
        LOF     & $0.67\!\pm\! 0.004$ & $0.78\!\pm\!0.004$  & $0.84\!\pm\!0.004$ & $0.79\!\pm\!0.005$ & $0.77\!\pm\!0.005$ & $0.73\!\pm\!0.007$ \\
        OCSVM   & $0.65\!\pm\! 0.005$ & $0.73\!\pm\! 0.003$ & $0.77\!\pm\!0.005$ & $0.81\!\pm\!0.004$ & $0.81\!\pm\!0.005$ & $0.76\!\pm\!0.006$ \\
        HoRPCA  & $0.5\!\pm\! 0.01$   & $0.5\!\pm\! 0.01$   & $0.5\!\pm\!0.01$   & $0.5\!\pm\!0.01$   & $0.5\!\pm\!0.01$   & $0.5\!\pm\!0.01$   \\
        WHoRPCA & $0.7\!\pm\! 0.004$  & $0.81\!\pm\! 0.004$ & $0.87\!\pm\!0.003$ & $0.81\!\pm\!0.004$ & $0.62\!\pm\!0.008$ & $0.56\!\pm\!0.01$  \\
        LOSS    & $0.72\!\pm\!0.01$   & $0.83\!\pm\!0.007$  & $0.88\!\pm\!0.007$ & $0.9\!\pm\!0.007$  & $0.78\!\pm\!0.01$  & $0.51\!\pm\!0.017$ \\
        GLOSS-EE & $\mathbf{0.82\!\pm\!0.005}$ & $\mathbf{0.91\!\pm\!0.003}$ & $\mathbf{0.95\!\pm\!0.002}$ & $0.93\!\pm\!0.002$ & $0.78\!\pm\!0.007$ & $0.65\!\pm\!0.008$ \\
        GLOSS-SVM & $0.79\!\pm\!0.006$ & $0.90\!\pm\!0.004$ & $0.94\!\pm\!0.002$ & $\mathbf{0.94\!\pm\!0.002}$ & $\mathbf{0.93\!\pm\!0.002}$ & $\mathbf{0.87\!\pm\!0.006}$ \\
        GLOSS-LOF & $0.79\!\pm\!0.006$ & $0.90\!\pm\!0.003$ & $\mathbf{0.95\!\pm\!0.002}$ & $0.88\!\pm\!0.003$ & $0.86\!\pm\!0.003$ & $0.85\!\pm\!0.004$\\
        \bottomrule
    \end{tabular}
    \end{adjustbox}
    \caption{Mean and standard deviation of AUC values corresponding to Figure \ref{fig:amp} (left) and Figure \ref{fig:pctg} (right). The proposed method, specifically GLOSS-SVM, outperforms the other algorithms in all cases significantly with  $p<0.001$. For varying $c$ values, $P$ is fixed at 0\%. Similarly, $c=2.5$ for varying $P$.}
    \label{tab:nyc_auc}
\end{table}
\subsubsection{Robustness against missing data}
In addition to injecting synthetic anomalies, we also remove varying number of days at random from the tensor to evaluate the robustness of the proposed method to missing data. After generating the synthetic data with $c=2.5$, a percentage $P$ of the mode-1 fibers is set to zero to simulate missing data, where the number of mode-1 fibers is equal to the total number of days, {i.e.} $7\times 52\times81$. The accuracy of anomaly detection for varying levels of missing data is illustrated in Fig. \ref{fig:pctg} and the corresponding AUC values (mean $\pm$ std) are given in Table \ref{tab:nyc_auc}.
\begin{figure}[H]
    \centering
    \begin{subfigure}[b]{.32\textwidth}
        \includegraphics[width=.98\columnwidth]{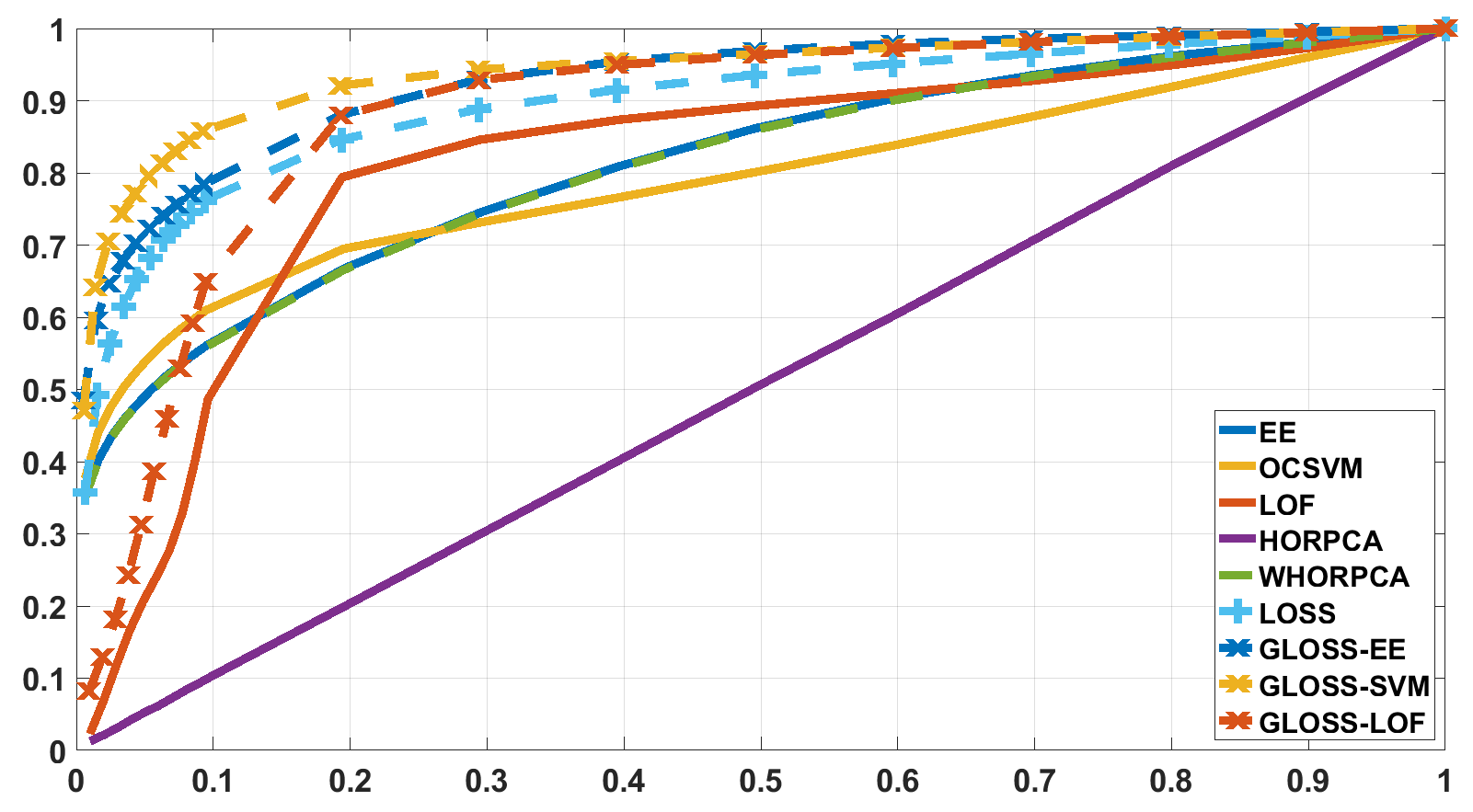}
        \caption{}
        \label{fig:miss20}
    \end{subfigure}
    \begin{subfigure}[b]{.32\textwidth}
        \includegraphics[width=.98\columnwidth]{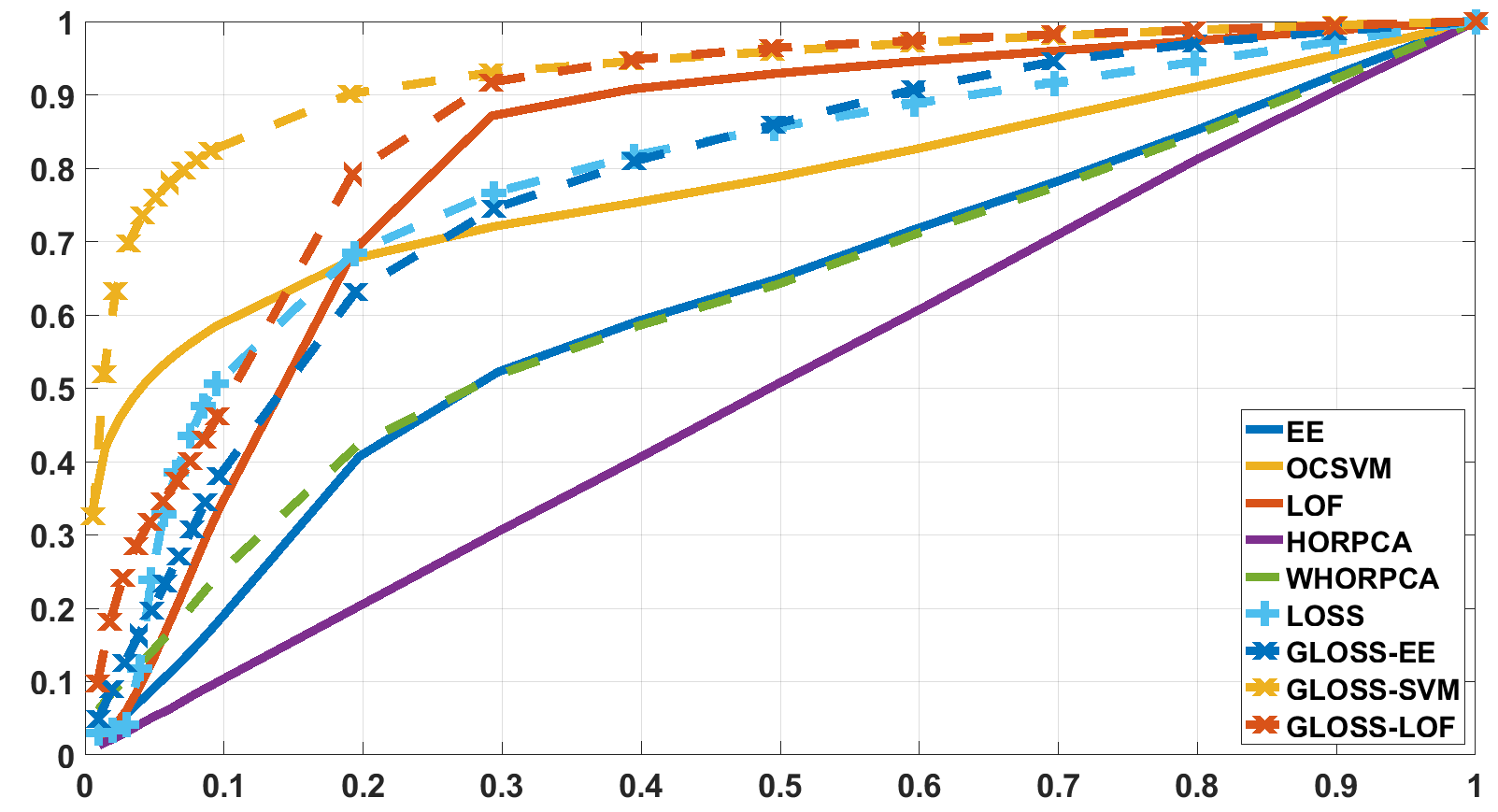}
        \caption{}
        \label{fig:miss40}
    \end{subfigure}
    \begin{subfigure}[b]{.32\textwidth}
        \includegraphics[width=.98\columnwidth]{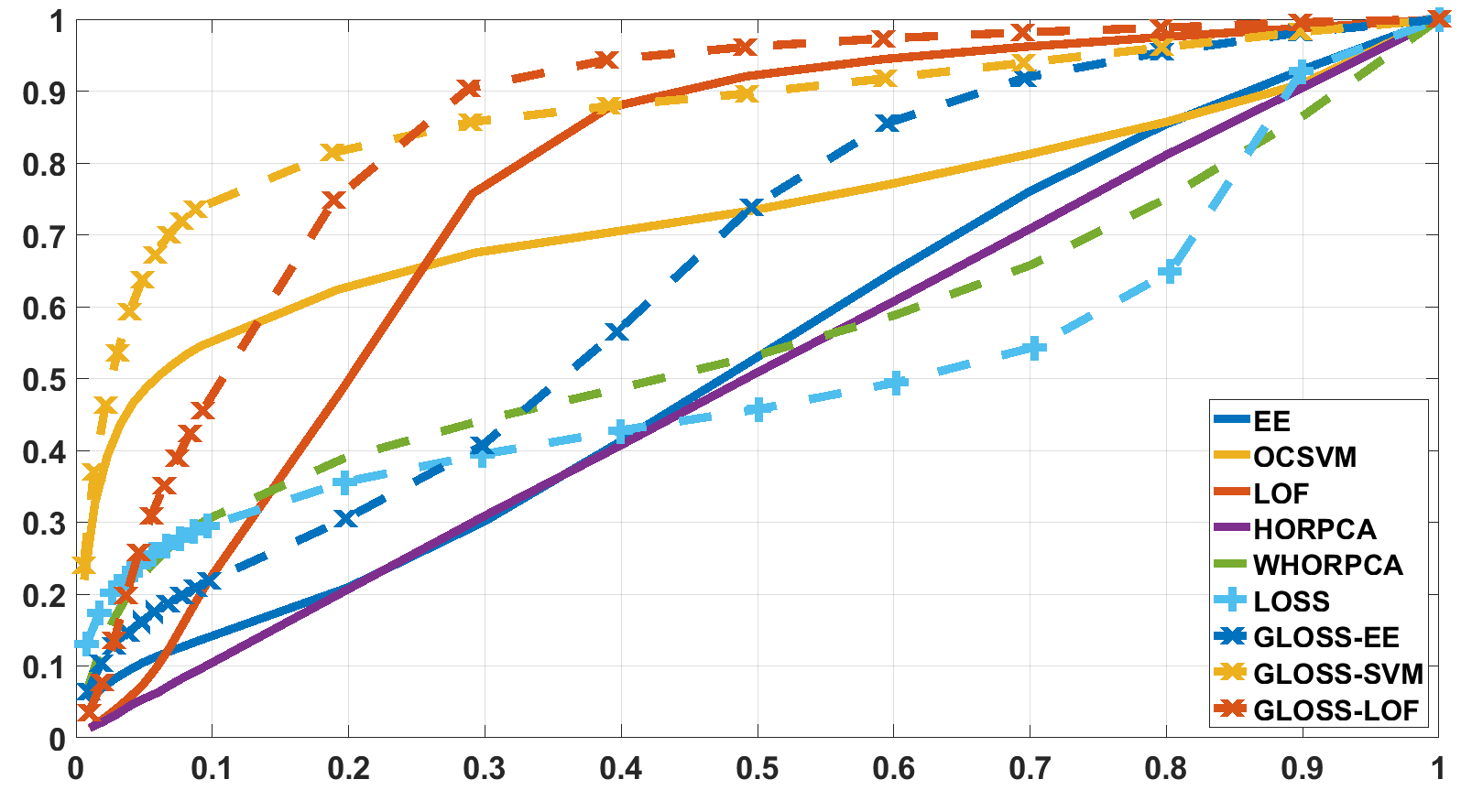}
        \caption{}
        \label{fig:miss60}
    \end{subfigure}
    \caption{ROC curves for varying percentage of missing data,  (a) $P=20\%$, (b) $P=40\%$, (c) $P=60\%$. ($c=2.5$)}
    \label{fig:pctg}
\end{figure}

While the performance of all the methods degrades with increasing levels of missing data, GLOSS provides the best anomaly detection performance and is robust against missing data compared to the other methods. In particular, GLOSS-SVM provides the highest accuracy. The proposed method provides better features for anomaly detection compared to using the original tensor or using the features from HoRPCA, WHoRPCA or LOSS. It can be seen that weighted low-rank approach improves the results compared to HoRPCA which provides the lowest accuracy. This is due to the fact that the data is not necessarily low-rank along each mode, thus focusing on modes with low variance improves the accuracy. Moreover, the selected $\lambda$ value for HoRPCA sets all entries of the sparse tensor to zero which is not useful for anomaly detection and provides an AUC of $0.5$. Incorporating temporal smoothness for the anomalies in the objective function lowers the false detection rate by penalizing instantaneous changes in traffic that do not constitute an actual anomaly. It is also interesting to note that while LOSS performs comparable to GLOSS for varying anomaly strength, the performance of LOSS degrades quickly with increasing missing data. Therefore, even though both WHoRPCA and LOSS are equipped to handle missing data, GLOSS is more robust as it uses side information in the form of similarity graphs. 

\subsection{Experiments on Real Data}
To evaluate the performance of the proposed methods on real data, we compiled a list of 20 urban events that took place in the important urban activity centers such as city squares, parks, museums, stadiums and concert halls, during 2018. We used the same set of urban events for both taxi and bike data. To detect the activities, top-$K$ percent, with varying K, of the highest anomaly scores of the extracted sparse tensors are selected as anomalies and compared against the compiled list of urban events. 
Detection performance for all methods is given in Tables \ref{tab:nyc_real} and \ref{tab:nyc_bike} for the taxi and bike data, respectively.
\begin{table}[H]
\scriptsize
    \centering
    \begin{tabular}{l c c c c c c c c c}
    \toprule
         \% & 0.014 & 0.07 & 0.14 & 0.3 & 0.7 & 1 & 2 & 3\\
         \midrule
         EE  & 0 & 0 & 1 & 5 & 8 & 9 & 15 & 18\\ 
         LOF & 0 & 0 & 0 & 1 & 1 & 2 & 3 & 5\\
         OCSVM & 0 & 0 & 0 & 3 & 9 & 11 & 14 & 16\\
         HoRPCA & 0 & 0 & 0 & 0 & 0 & 0 & 0 & 1\\
         WHoRPCA & 3 & 4 & 4 & 4 & 6 & 7 & 9 & 9\\ 
         LOSS & 0 & 0 & 1 & 5 & 11 & 14 & 16 & 19\\ 
         GLOSS-EE & 1 & 6 & 8 & 12 & 16 & 18 & 19 & 20\\ 
         GLOSS-LOF & 1 & 6 & 8 & 14 & 17 & 18 & 19 & 20\\
         GLOSS-SVM & 0 & 2 & 3 & 7 & 13 & 14 & 17 & 19\\ 
    \bottomrule
    \end{tabular}
    \caption{Results for 2018 NYC Yellow Taxi Data. Columns indicate the percentage of selected points with top anomaly scores. The table entries correspond to the number of events detected at the corresponding percentage.}
    \label{tab:nyc_real}
\end{table}
From Table \ref{tab:nyc_real}, it can be seen that anomaly scoring methods applied to the spatiotemporal features extracted by GLOSS perform the best for the NYC Taxi data. The performance of GLOSS is followed by LOSS as temporal smoothness allows for detection of events at lower $K$ by removing anomalies resulting from noise. WHoRPCA performs well initially but as more points are considered it fails to detect the event of interest.  HoRPCA performs the worst in both data sets as the optimization is not tailored for anomaly detection. Among the baseline methods, EE performs the best while LOF performs the worst. However, when the features extracted from GLOSS are input to LOF and EE, their performances become very similar. This shows that GLOSS is effective at separating anomalous entries from noise and normal traffic activity and thus, improves the performance of both LOF and EE. It is important to note that most of the anomalies cannot be detected at low $K$ values because events such as New Year's Eve or July 4th celebrations change the activity pattern in the whole city and constitute the majority of the anomalies detected at low $K$ values for Taxi Data.

\begin{table}[H]
\scriptsize
    \centering
    \begin{tabular}{l c c c c c c c c c}
    \toprule
         \% & 0.3 & 1 & 2 & 3 & 4.2 & 7 & 9.7 & 12.5\\
         \midrule
         EE  & 0 & 0 & 2 & 2 & 2 & 2 & 2 & 3 \\
         LOF  & 1 & 1 & 1 & 1 & 2 & 2 & 3 & 5\\
         OCSVM  & 0 & 0 & 1 & 2 & 2 & 2 & 4 & 5\\ 
         HoRPCA  & 0 & 0 & 0 & 0 & 0 & 0 & 0 & 0\\ 
         WHoRPCA  & 2 & 4 & 6 & 6 & 6 & 6 & 6 & 6\\ 
         LOSS & 1 & 2 & 2 & 2 & 9 & 9 & 10 & 14\\
         GLOSS-EE  & 1 & 1 & 3 & 5 & 9 & 13 & 14 & 14\\
         GLOSS-LOF & 1 & 2 & 2 & 3 & 3 & 3 & 3 & 3\\ 
         GLOSS-SVM  & 1 & 2 & 5 & 5 & 7 & 9 & 10 & 13\\ 
    \bottomrule
    \end{tabular}
    \caption{Results on 2018 NYC Bike Trip Data. }
    \label{tab:nyc_bike}
\end{table}
The performance of all methods is significantly reduced in Bike Data as can be seen from Table \ref{tab:nyc_bike}. This is because Bike Data is very noisy with a large number of days, or points that would be considered anomalous. Also, some of the selected events do not produce significant changes in Bike Data such as New Year's Eve as usage of bikes at midnight is low even though it's a significant event for taxi traffic. Changes in the weather also affect the performance by increasing the variance of the data, especially across the third mode, which corresponds to the weeks of the year. In Figure \ref{fig:bike_comp}, we illustrate the bike data for July 4th at Hudson River banks, and the low-rank and sparse parts extracted by GLOSS. It can be seen that as the data varies across different weeks, the low-rank part can explain this variance well by fitting a pattern to days with varying amplitudes. Thus, the proposed method does not get affected by events such as the weather as it can capture both low and high traffic days in the low-rank part which can be seen in Fig \ref{fig:case_lr}. The deviations from the daily pattern, rather than the actual traffic volume, is captured by the sparse part, which is then input to the anomaly scoring algorithms. Thus, our method is able to extract the events at a fairly low $K$.  

\begin{figure}[H]
    \centering
    \begin{subfigure}[b]{.32\textwidth}
        \includegraphics[width=.98\columnwidth]{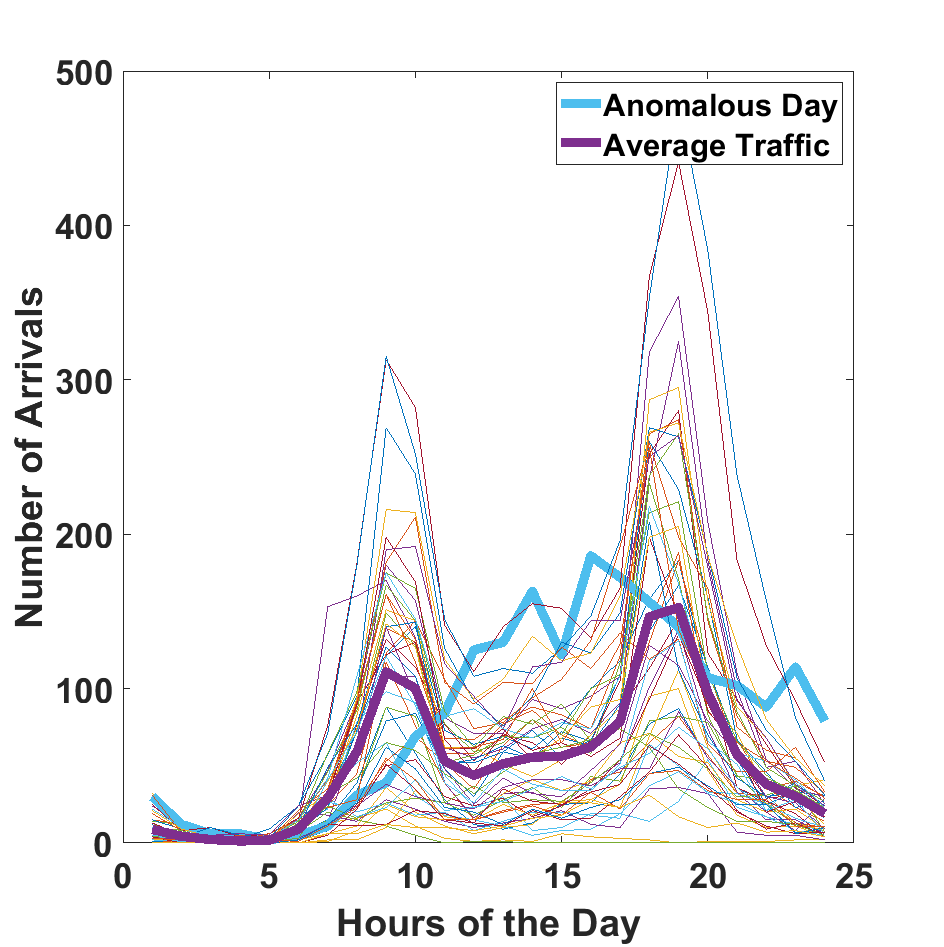}
        \caption{}
        \label{fig:case}
    \end{subfigure}
    \begin{subfigure}[b]{.32\textwidth}
        \includegraphics[width=.98\columnwidth]{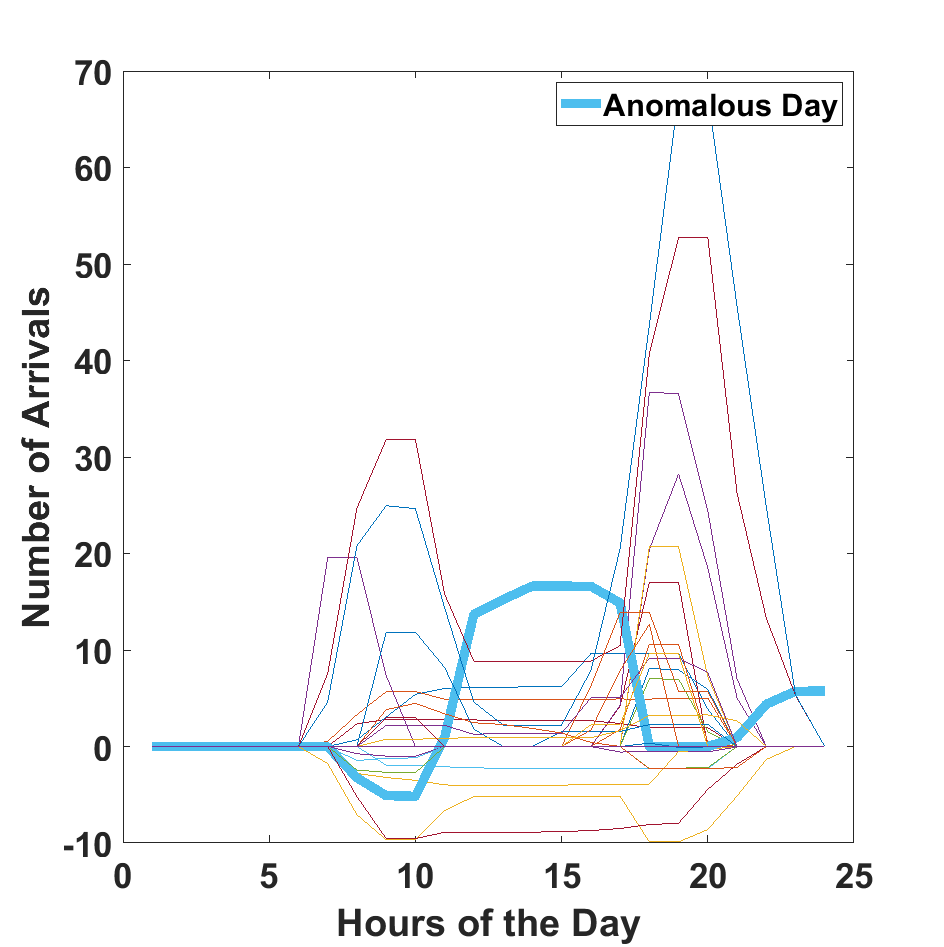}
        \caption{}
        \label{fig:case_spa}
    \end{subfigure}
    \begin{subfigure}[b]{.32\textwidth}
        \includegraphics[width=.98\columnwidth]{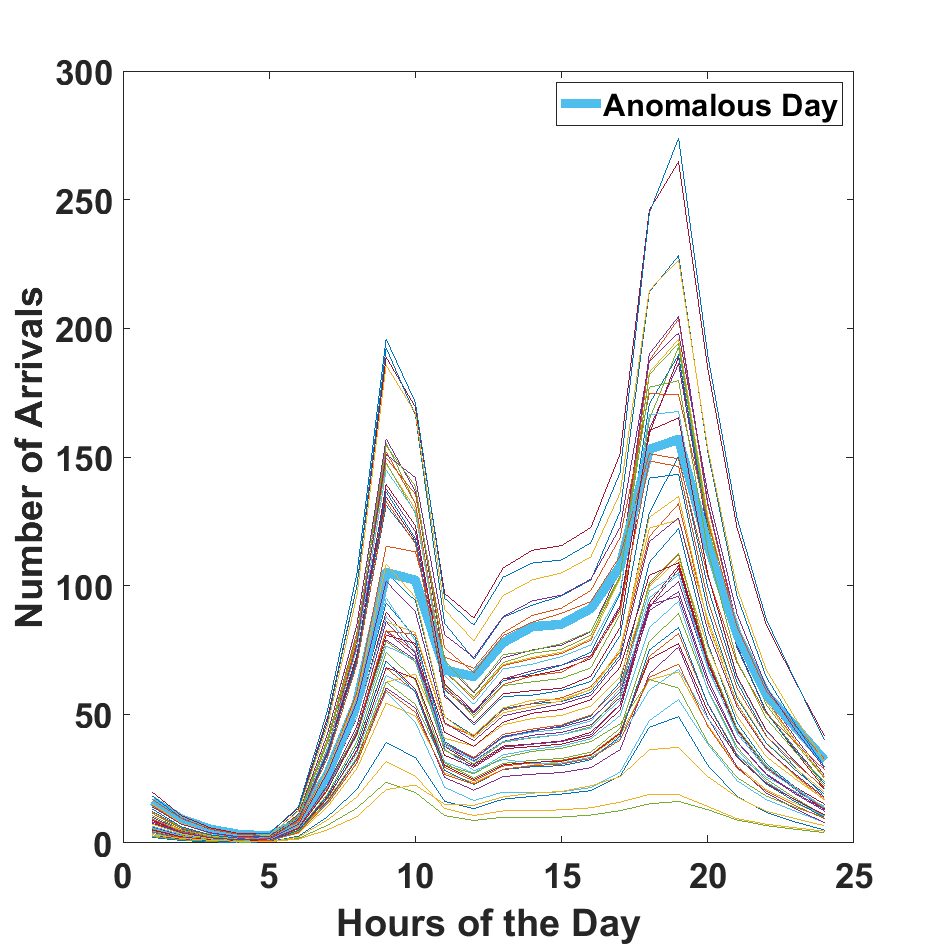}
        \caption{}
        \label{fig:case_lr}
    \end{subfigure}
    \caption{Bike Activity data, the extracted sparse part and low-rank part across for July 4th Celebrations at Hudson River banks. (a) Real Data where the traffic for 52 Wednesdays is shown along with the traffic on Independence Day and average traffic; (b) Sparse tensor where the curve corresponding to the anomaly is highlighted; (c) Low-rank tensor with the curve corresponding to the Independence Day highlighted.}
    \label{fig:bike_comp}
\end{figure}

\section{Conclusion}
In this paper, we proposed a robust tensor decomposition based anomaly detection method for urban traffic data. The proposed method extracts  a low-rank component using a weighted nuclear norm and imposes the sparse component to be temporally smooth to better model the anomaly structure. Finally, graph regularization is employed to preserve the geometry of the data and to account for nonlinearities in the anomaly structure. An ADMM  based computationally efficient and scalable algorithm is proposed to solve the resulting optimization problem. As the proposed method focuses on spatiotemporal feature extraction, the resulting features can be input to well-known anomaly detection methods such as EE, LOF and OCSVM for anomaly scoring.

The proposed method is evaluated on both synthetic and real urban traffic data. Results on synthetic data illustrate the robustness of our method to varying levels of missing data and its sensitivity to even low amplitude anomalies. In particular, our method outperforms WHoRPCA thanks to temporal smoothness assumption on the sparse part. Moreover, the graph regularization  improves the accuracy further by ensuring that the low-rank projections preserve local geometry of the data. In real data, our method begins to detect anomalies earlier, {i.e.} the top anomaly scores usually correspond to events of interest, than existing methods. GLOSS provides further improvement over LOSS as more events are detected for a given number of selected points. Furthermore, the results from real data show how the extracted sparse component highlights the anomalous activities. 

In future work, a statistical tensor anomaly scoring  method  will be explored instead of scoring each fiber individually by a separate algorithm. Applications and extensions of the proposed method on network data and other spatiotemporal data with different characteristics such as fMRI will also be considered.



%
\newpage
\bibliographystyle{IEEEtran}
\bibliography{main}

\end{document}